\newenvironment{reminder}[1]{\medskip
\noindent {\bf Reminder of #1  }\em}{\smallskip}
\newtheorem{theorem}{Theorem}[section]
\newtheorem{proposition}{Proposition}
\newtheorem{corollary}{Corollary}[section]
\newtheorem{lemma}{Lemma}[section]
\newtheorem{definition}{Definition}[section]
\newenvironment{proofof}[1]{\medskip
\noindent {\bf Proof of #1.  }}{\hfill$\Box$
\medskip}
\def \P {{\sf P}}
\def \TC {{\sf TC}}
\def \AC {{\sf AC}}
\def \NC {{\sf NC}}
\def \MAJ {{\sf MAJ}}
\def\THR {{\sf LTF}}
\def \LTF {\THR}
\def\eps{\varepsilon}
\def\poly{\text{poly}}
\def \E {\text{\sf E}}
\def \Z {{\mathbb Z}}
\def \R {{\mathbb R}}
\def \F {{\mathbb F}}
\def \N {{\mathbb N}}
\def \MODthree {{\sf MOD3}}
\def \MODtwo {{\sf MOD2}}
\title{Super-Linear Gate and Super-Quadratic Wire Lower Bounds \\ for Depth-Two and Depth-Three Threshold Circuits}
\author{Daniel M. Kane\\{University of California, San Diego} \and Ryan Williams\thanks{Supported by an Alfred P. Sloan Fellowship and NSF CCF-1212372. Any opinions, findings, and conclusions or recommendations
expressed in this material are those of the author(s) and do not necessarily reflect the views of the NSF. Part of the work was performed while visiting the Simons Institute for the Theory of Computing, Berkeley, CA.}\\{Stanford University}}
\begin{document}
\date{}
\maketitle

\begin{abstract} In order to formally understand the power of neural computing, we first need to crack the frontier of threshold circuits with two and three layers, a regime that has been surprisingly intractable to analyze.

We prove the first super-linear gate lower bounds and the first super-quadratic wire lower bounds for depth-two linear threshold circuits with arbitrary weights, and depth-three majority circuits computing an explicit function.
\begin{itemize}
\item We prove that for all $\epsilon\gg \sqrt{\log(n)/n}$, the linear-time computable Andreev's function cannot be computed on a $(1/2+\epsilon)$-fraction of $n$-bit inputs by depth-two linear threshold circuits of $o(\epsilon^3 n^{3/2}/\log^3 n)$ gates, nor can it be computed with $o(\eps^{3} n^{5/2}/\log^{7/2} n)$ wires.

This establishes an average-case ``size hierarchy'' for threshold circuits, as Andreev's function is computable by uniform depth-two circuits of $o(n^3)$ linear threshold gates, and by uniform depth-three circuits of $O(n)$ majority gates.
\item We present a new function in $\P$ based on small-biased sets, which we prove cannot be computed by a majority vote of depth-two linear threshold circuits with $o(n^{3/2}/\log^3 n)$ gates, nor with $o(n^{5/2}/\log^{7/2}n)$ wires.
\item We give tight average-case  (gate and wire) complexity results for computing PARITY with depth-two threshold circuits; the answer turns out to be the same as for depth-two majority circuits.
\end{itemize}

The key is a new random restriction lemma for linear threshold functions. Our main analytical tool is the Littlewood-Offord Lemma from additive combinatorics.
\end{abstract}

\thispagestyle{empty}
\newpage
\setcounter{page}{1}

\section{Introduction}

A function $f: \{0,1\}^n \rightarrow \{0,1\}$ is a \emph{linear threshold function} (LTF) if there are \emph{weights} $w_1,\ldots,w_n,t \in \R$ such that for all $(a_1,\ldots,a_n) \in \{0,1\}^n$, $f(a_1,\ldots,a_n) = 1$ if and only if $\sum_{i} w_i a_i \geq t$. LTFs have been studied since the 1940's, as a model of the ``all-or-none character of nervous activity''~\cite{McCulloch-Pitts43}. Understanding the capabilities of collections of LTFs is closely related to understanding the power of neural networks; this connection motivated extensive research on the subject~\cite{Muroga71}. In the 1960's, Minsky and Papert~\cite{Minsky-Papert69} proved many limitations on the abilities of single LTFs. 

In this paper, we focus on models that appear to be only mild extensions: depth-two LTF circuits (a.k.a. $\THR \circ \THR$ circuits) and depth-three LTF circuits with all weights in $\Z \cap [-\poly(n),\poly(n)]$ (a.k.a. $\MAJ \circ \MAJ \circ \MAJ$, or $\TC^0_3$). $\THR \circ \THR$ circuits have a number of LTF gates connected directly to the input variables, and a single output LTF gate which can take in input variables as well as outputs of previously computed LTFs; depth-three LTF circuits are defined similarly. These circuits roughly correspond to neural nets with one or two hidden layers, respectively. Minsky and Papert~\cite{Minsky-Papert69} proved that $\THR \circ \THR$ of $2^{O(n)}$ gates can compute any Boolean function, but failed to prove a strong impossibility result for these circuits.

Despite considerable study in the complexity of neural networks (see Section~\ref{history} for more background), the power of a single hidden layer is still poorly understood: prior to our work, it was open whether every function in \emph{nondeterministic $2^{O(n)}$ time} could be computed by $\THR \circ \THR$ circuit families (or $\TC^0_3$ families) with $O(n)$ gates, or by $\THR \circ \THR$ circuit families with $n^{3/2} \cdot \poly(\log n)$ wires. (Linear-gate lower bounds were proved by several groups in the early 90's; an $n^{3/2}$ wire lower bound was proved in 1993 by Impagliazzo, Paturi, and Saks~\cite{ImpagliazzoPS97}. See Section~\ref{history}.)

By results of Allender and Koucky~\cite{Allender-Koucky10}, in order to separate $\NC^1$ from $\TC^0$, we only need to exhibit a function in $\NC^1$ that does not have $n^{1.1}$ gates for \emph{every} depth $d \geq 2$. That is, the problem of proving super-linear gate lower bounds for all $O(1)$-depth threshold circuits turns out to be as difficult as proving \emph{super-polynomial} lower bounds for $O(1)$-depth threshold circuits. Before we can do that, we have to first prove non-linear gate lower bounds for depth-three circuits.

\subsection{Our Results}

We prove the first non-trivial super-linear gate lower bounds and super-quadratic wire lower bounds  for depth-two threshold circuits ($\THR \circ \THR$), and depth-three majority circuits ($\TC^0_3$). Our hard functions have much lower complexity than ${\sf NTIME}[2^{O(n)}]$; they are in fact computable in $\P$ (even in uniform $\TC^0$ itself).

We start with lower bounds for a linear-time computable function known as \emph{Andreev's function}, which also appears in the best known formula size lower bounds. Our lower bounds extend to the average case, showing that small depth-two threshold circuits cannot compute Andreev's function on more than a $(1/2+o(1))$-fraction of inputs. To define the function, let us set up some notation. For $k \in \N$, the \emph{$k$-bit multiplexer function} is defined as $M_{2^k}(x_1,...,x_{2^k}, a_1,...,a_k) := x_{\text{bin}(a_1 ... a_k)}$, where $\text{bin} : \{0,1\}^k \rightarrow \{1,\ldots,2^k\}$ converts $k$-bit strings into positive integers. Let $n = 2\cdot 2^k$. The \emph{Andreev function} $A_n$~\cite{Andreev87} is defined as: \[A_n(x, a_{1,1}\ldots a_{1,(2^k/k)},\ldots \ldots,a_{k,1}  \ldots a_{k,(2^k/k)})
= M_{2^k}\left(x,
\left(\sum_{j=1}^{(2^k/k)} a_{1,j} \bmod 2\right), \ldots,
\left(\sum_{j=1}^{(2^k/k)} a_{k,j} \bmod 2\right)\right),\]
where $x \in \{0,1\}^{2^k}$, and $a_{i,j} \in \{0,1\}$. In words, $A_n$ computes the parity on $k$ disjoint sets of $2^k/k$ inputs, then feeds the resulting $k$-bit string to the multiplexer function on the remaining $2^k$ inputs $x$. (For simplicity we may think of $k$ itself as a power of two, so we do not have to worry about divisibility issues with $2^k/k$.) 

Since 1987, the function $A_n$ has been a primary target for formula size lower bounds~\cite{Andreev87,Impagliazzo-Naor88,Paterson-Zwick93,Hastad98,Impagliazzo-Meka-Zuckerman12}. The best known explicit size lower bounds for formulas over both the DeMorgan basis ($n^{3-o(1)}$) and the full binary basis ($n^{2-o(1)}$) are achieved by $A_n$. Our first result is a non-linear gate lower bound for computing $A_n$ with depth-two threshold circuits:

\begin{theorem} \label{andreev-lb} Any function $f$ that agrees with $A_n$ on at least a $(1/2+\epsilon)$-fraction of inputs for some $\epsilon \gg \sqrt{\log(n)/n}$ cannot be computed by $\THR \circ \THR$ circuits with fewer than $\Omega(\epsilon^3 n^{3/2}/\log^3(n))$ gates or fewer than $\Omega(\epsilon^3 n^{5/2}/\log^{7/2}(n))$ wires.
\end{theorem}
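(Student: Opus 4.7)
The plan is a random restriction argument tailored to the multiplexer structure of $A_n$, combined with a new random restriction lemma for linear threshold functions based on the Erdős--Littlewood--Offord anti-concentration inequality.

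First I would fix the restriction $\rho$: keep the $x$-variables entirely free, and for each parity block $i \in \{1,\ldots,k\}$ pick one surviving index $j_i \in \{1,\ldots,2^k/k\}$ uniformly at random, assigning every other $a_{i,j}$ an independent uniform bit. If $r_i \in \{0,1\}$ denotes the parity of the fixed bits in block $i$, then the $i$-th parity input to the multiplexer equals $a_{i,j_i} \oplus r_i$, and after relabeling the $x$-coordinates by an $r$-dependent permutation $\pi_r$, the restricted function $A_n|_\rho$ is exactly the multiplexer $M_{2^k}(\pi_r(x),\, a_{1,j_1},\ldots,a_{k,j_k})$ in $\Theta(n)$ free bits. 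Thus $\rho$ preserves all of the multiplexer ``hardness'' of $A_n$ while deleting almost all of the selector variables.

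Next I would prove a random restriction lemma for a single bottom LTF gate $g = \mathbf{1}[\sum_\ell w^{(x)}_\ell x_\ell + \sum_{i,j} w^{(a)}_{i,j} a_{i,j} \ge t]$. Under $\rho$ the threshold condition becomes $\sum_\ell w^{(x)}_\ell x_\ell + \sum_i w^{(a)}_{i,j_i} a_{i,j_i} \ge t - S$, where $S$ is (up to a deterministic shift) a $\pm 1$-signed sum over the fixed $a$-coordinates. Littlewood--Offord gives anti-concentration of $S$: the probability that $S$ lies in any interval of length $L$ is $O(L/(\sigma\sqrt{m}))$, where $\sigma$ is the weight scale and $m$ is the number of coordinates whose weights are not dominated by a handful of outliers. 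Taking $L$ equal to the total variation of the free side of the sum shows that with probability $1 - O(1/\sqrt{m})$ over $\rho$ the residual threshold $t - S$ lands outside the achievable range of the free side, in which case $g|_\rho$ is identically $0$ or $1$. The degenerate case of an LTF whose weight vector is dominated by a few large coordinates is handled separately: such gates already behave like juntas on those coordinates, and are absorbed by a union-bound cost of $\poly(\log n)$.

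Third, a union bound over the $s$ gates gives: if $s \ll \eps^{3} n^{3/2}/\log^{3}(n)$, then with probability $\gg \eps$ over $\rho$, every bottom gate is either a constant or depends on $O(\log n)$ live $a$-variables in total. The top LTF is then a threshold function of the free $x$-variables and a small junta of the $a_{i,j_i}$-bits. A discrepancy/counting step then finishes: the multiplexer $M_{2^k}(\pi_r(x), b)$ requires specifying $\Theta(n)$ essentially independent output values as $b$ ranges over $\{0,1\}^k$, while the restricted circuit has far too few degrees of freedom to match more than a $(1/2 + o(\eps))$-fraction of them, contradicting the assumed $(1/2+\eps)$-agreement. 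The wire bound follows from the same template, using that LTFs of large fan-in are anti-concentrated at a proportionally faster rate, so the Littlewood--Offord savings scale with wire count rather than with gate count.

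The main obstacle is proving the LTF restriction lemma with the correct quantitative rate. The hypothesis $\eps \gg \sqrt{\log(n)/n}$ is exactly the scale at which Littlewood--Offord begins to degrade — the regime in which LTFs with concentrated weight vectors are no longer a negligible fraction of all possibilities — so one must carry out the ``spread weights'' and ``concentrated weights'' cases separately and patch them into a single statement about the probability that a random restriction collapses a given LTF to a near-constant. Tracking how the number of gates, the fan-in, and the restriction parameter interact in the final union bound is what produces the precise $\eps^{3}$ and $\log^{3} n$ factors in the lower bound.
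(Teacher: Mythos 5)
There is a genuine gap, and it starts with your choice of restriction. You keep all $n/2$ of the $x$-variables free and only restrict the $a$-variables. But then a bottom-level gate whose linear form puts substantial weight on the $x$-coordinates (e.g.\ $\llbracket \sum_\ell x_\ell \geq n/4\rrbracket$) is \emph{never} forced to a constant: the ``achievable range of the free side'' in your Littlewood--Offord step has length comparable to (or exceeding) the entire spread of the fixed sum $S$, so the anti-concentration bound $\Pr[S\in I]\le O(1)/\sqrt{k}$ --- which requires covering $I$ by subintervals no longer than the surviving weights --- gives nothing. The paper's restriction fixes \emph{every} variable except one survivor per parity block, including all of $x$; the $x$-bits are set to uniformly random values, and Littlewood--Offord is applied to the full linear form. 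Those random $x$-bits then do double duty: they are the truth table of a random function on $k=\Theta(\log n)$ bits, which by a counting argument (there are only $2^{O(k^2 s)}$ functions computed by $s$-gate depth-two threshold circuits) is, with probability $1-\eps/3$, not $(1/2+\eps/3)$-approximable by any $\THR\circ\THR$ circuit with $o(\eps^2 n/\log^2 n)$ gates. Your final step instead asks the restricted circuit to fail at computing the multiplexer $M_{2^k}(\pi_r(x),b)$ with $x$ free --- but the multiplexer has $O(n)$-gate $\THR\circ\THR$ circuits, so no discrepancy argument against it can yield a super-linear bound; the hardness must come from hard-wiring $x$ to a hard truth table.

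There is also a quantitative error in your union-bound step. With $s\approx \eps^3 n^{3/2}/\log^3 n$ gates, each surviving unforced with probability $O(\log n/\sqrt n)$, the expected number of unforced gates is $\Theta(\eps^3 n/\log^2 n)$, which for constant $\eps$ is about $n/\log^2 n \gg 1$. So you cannot conclude that ``every bottom gate is either a constant or a small junta''; you can only conclude (via Markov) that at most $O(\eps^2 n/\log^2 n)$ gates survive with probability $1-\eps/3$. This is precisely why the residual hardness result for random $k$-bit functions against circuits of that many gates is indispensable --- it is what converts ``few surviving gates'' into a contradiction, and it is the missing second half of your argument. (A smaller point: your heuristic that larger fan-in gives faster anti-concentration is backwards relative to the paper's wire lemma, whose failure probability $O(w|{\cal P}|^{3/2}/n^{3/2})$ \emph{grows} with the fan-in $w$; the per-wire savings come from the event that at least two relevant inputs of a low-fan-in gate survive being quadratically unlikely.)
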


In contrast with these lower bounds, there are several nice (and somewhat easy) circuit constructions for computing Andreev's function as well:

\begin{theorem}\label{Andreev-ub-thm}
The function $A_n$ has (uniform) depth-3 $\MAJ \circ \MAJ \circ \MAJ$ (i.e. $\TC^0_3$) circuits of $O(n)$ gates, (uniform) $\THR \circ \THR$ circuits of $O(n^3/\log n)$ gates, parity decision trees of depth at most $\log_2 (n)$, and (uniform) $\MODthree \circ \MODtwo$ circuits of $O(n^2)$ gates.
\end{theorem}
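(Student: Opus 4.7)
The theorem is four independent upper bounds, each exploiting different structure of $A_n = M_{2^k}(x, p_1(a), \dots, p_k(a))$; I sketch each in turn.

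For the depth-$3$ $\MAJ\circ\MAJ\circ\MAJ$ construction of $O(n)$ gates, the key identity is that for each candidate address $i \in \{0,1\}^k$, the ``index-matched'' term
\[
G_i \;:=\; x_i \wedge I_i(a) \;=\; x_i \wedge \bigwedge_{j=1}^k [p_j(a) = i_j]
\]
is a \emph{single} MAJ gate applied to $x_i$ and to the $km = n/2$ block-prefix thresholds $T_{j,\ell} := [\sum_{\ell'=1}^{m} a_{j,\ell'} \geq \ell]$. Indeed, $[\sum_{\ell'} a_{j,\ell'} = \ell] = T_{j,\ell} - T_{j,\ell+1}$ telescopes, so each $[p_j = i_j]$ is a $\pm 1$-weighted affine combination of the $T_{j,\ell}$; since the AND of $k+1$ Boolean values equals the threshold ``their sum is $\geq k+1$,'' the conjunction $G_i$ becomes a single affine threshold in $x_i$ and the $T_{j,\ell}$---a MAJ with negated literals and padded constants. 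The three layers are: (L1) the $n/2$ prefix thresholds $T_{j,\ell}$; (L2) the $N = n/2$ indicators $G_i$; (L3) one padded MAJ realizing $\bigvee_i G_i = A_n$. Total: $n + O(1)$ MAJ gates. The parity decision tree of depth $\log_2 n$ is immediate: query $p_1,\dots,p_k$ in order, determining $P$, and then query $x_P$ (a trivial singleton parity), for total depth $k+1 = \log_2 n$.

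For the $\MODthree \circ \MODtwo$ circuit of $O(n^2)$ gates, I would expand $A_n$ in the $\pm 1$-Fourier basis and realize the result over $\mathbb{F}_3$. Writing $[p_j = i_j] = (1+(-1)^{i_j}\chi_{B_j})/2$ and $x_i = (1-\chi_i)/2$ (with $\chi_S$ the $\pm 1$ character on a subset $S$ of input bits), a direct computation yields
\[
A_n \;=\; \tfrac12 \;-\; \tfrac{1}{2^{k+1}} \sum_{i \in [N]} \chi_i \prod_{j=1}^{k}\bigl(1 + (-1)^{i_j}\chi_{B_j}\bigr).
\]
Expanding the product exhibits at most $N \cdot 2^k = n^2/4$ distinct characters $\chi_i\cdot\chi_{B_S}$, each of which is the parity (a single $\MODtwo$ gate) of a specific subset of input bits. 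Since $2$ is invertible modulo $3$, all coefficients $\pm 2^{-(k+1)}$ and $\tfrac12$ lie in $\mathbb{F}_3$, and via $\chi_S \equiv 1 + \MODtwo(S)\pmod{3}$, a single top $\MODthree$ gate (duplicating each $\MODtwo$ input $1$ or $2$ times to encode its $\mathbb{F}_3$-coefficient, and using a few constant inputs to realize the $\tfrac12$) reproduces $A_n$. The total is $O(n^2)$ $\MODtwo$ gates plus one $\MODthree$ gate.

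The most delicate case is the depth-$2$ $\THR \circ \THR$ circuit of $O(n^3/\log n)$ gates, which I would realize by a Lupanov-style sub-block tabulation. Partition each of the $k$ parity blocks into sub-blocks of width $b = \Theta(\log n)$, giving $S = \Theta(n/\log n)$ sub-blocks in all, with $2^b = \Theta(n)$ possible values per sub-block. For every triple $(i, s, v) \in [N] \times [S] \times \{0,1\}^b$, place a bottom LTF $B_{i,s,v}$ realizing the conjunction $[x_i = 1] \wedge [a|_s = v]$---an AND of $b+1$ literals, and hence a single LTF---giving $N \cdot S \cdot 2^b = \Theta(n^3/\log n)$ bottom gates. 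A single top LTF then combines these with integer weights $w(i,s,v)$ chosen so that the sum $\sum_{i,s} w(i, s, v_s^{\ast}(a))\, x_i$ (where $v_s^{\ast}(a)$ is the actually-realized sub-block value) exceeds the threshold iff $x_{P(a)} = 1$: the weights first collapse, for each block $j$ and each candidate $i_j$, the sub-block data of block $j$ into a single ``block-vote'' score for $p_j(a) = i_j$, and then geometrically scale block-votes across $j$ so that any single parity mismatch in some block decreases the $i$-th index's contribution by more than the entire remaining-index budget, isolating $i = P(a)$ with strictly dominant weight. The main obstacle is translating the $k$ inherently-modular parity constraints into sub-block-additive weights; this is where one needs the full, unbounded-weight freedom of $\THR$ to convert the additive sub-block structure into an effectively multiplicative separation across blocks.
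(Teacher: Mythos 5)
Three of your four constructions are correct and coincide, in substance, with the paper's. Your $\MAJ\circ\MAJ\circ\MAJ$ construction is the same argument in slightly different clothing: the paper builds $O(\log n)$ parity sub-circuits of the form $\LTF\circ\MAJ$ whose top linear form always evaluates to $\pm 1$, then ``merges'' that top linear form into each middle-layer MAJ of an OR-of-MAJ for the multiplexer, which is precisely your telescoping prefix-threshold substitution $[p_j=i_j]=\sum_{\ell}\pm T_{j,\ell}$ folded into a single threshold for each $G_i$. The $\log_2 n$-depth parity decision tree and the $\MODthree\circ\MODtwo$ Fourier expansion (with the $\chi_S\equiv 1+\MODtwo(S)\pmod 3$ and coefficient-duplication trick) are also the paper's arguments.

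The $\THR\circ\THR$ construction, however, has a real gap, and the ``main obstacle'' you flag at the end is not a delicacy to be finessed with large weights --- it is a genuine impossibility. With bottom gates $B_{i,s,v}=[x_i=1]\wedge[a|_s=v]$, the top linear form collapses on every input to $\sum_i x_i\, c_i(a)$ where $c_i(a)=\sum_s w(i,s,v_s^*(a))$, and for the circuit to compute $A_n$ the per-block contribution $V_j(i,a):=\sum_{s\in\text{block }j} w(i,s,v_s^*(a))$ would have to be large when $p_j(a)=i_j$ and small otherwise. But $V_j$ is \emph{additive} across sub-blocks, while $p_j$ is the \emph{XOR} of the sub-block parities, and these are incompatible as soon as some block has two or more sub-blocks: take $v_1,v_1'$ two values of sub-block $1$ with opposite parity, and $v_2,v_2'$ likewise for sub-block $2$, holding the rest fixed. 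Then $V_j(v_1,v_2)+V_j(v_1',v_2')=V_j(v_1',v_2)+V_j(v_1,v_2')$ by additivity, yet the left side should be ``big $+$ big'' and the right side ``small $+$ small.'' No choice of weights, however astronomical, escapes this linear-algebraic obstruction; the modular structure simply cannot be encoded by a single tabulated top gate over sub-block indicators. The paper's route avoids this entirely: it expands each $(k+1)$-bit AND in the natural DNF for $A_n$ into its Fourier expansion over the block parities and $x_i$ (a weighted sum of $O(n)$ true parity functions of the inputs), writes the OR as a single top $\THR$ of these parity terms, and then realizes each parity by a $\THR\circ\THR$ sub-circuit of $O(n)$ gates whose top linear form is always $\pm 1$ and hence can be absorbed into the output $\THR$. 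In short, the modular structure must live in genuine two-layer threshold sub-circuits, not in the weights of a single top gate.
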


Hence the lower bounds of Theorem~\ref{andreev-lb} establish several average-case complexity hierarchies in the low-depth circuit regime: for example, $A_n$ is computable by depth-three LTF circuits of $O(n)$ gates, but is not computable on a $1/2+\eps$ fraction of inputs by depth-two LTF circuits of $\eps^3 \cdot n^{3/2-o(1)}$ gates. 

As our lower bounds are average-case, we easily obtain some lower bounds on computing $A_n$ with \emph{distributions} of $\THR \circ \THR$ circuits.  Say that a Boolean function $f$ is computed by an \emph{$\epsilon$-Approximate Majority of $\THR \circ \THR$} if there is a collection ${\cal C}$ of $\THR \circ \THR$ circuits such that, for every input $x$, at least a $1/2+\eps$ fraction of circuits in ${\cal C}$ output the value $f(x)$.

\begin{corollary}\label{apx-majority} Every \emph{$\epsilon$-Approximate Majority of $\THR \circ \THR$} for $A_n$ needs at least $\Omega(\epsilon^3 n^{3/2}/\log^3(n))$ gates and $\Omega(\eps^{3} n^{5/2}/\log^{7/2} n)$ wires.
\end{corollary}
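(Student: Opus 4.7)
The plan is to derive Corollary~\ref{apx-majority} as an immediate consequence of Theorem~\ref{andreev-lb} via a standard averaging argument, essentially reducing ``approximate majority of circuits'' to ``single circuit with approximation.''

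First, I would let $\mathcal{C} = \{C_1, \ldots, C_m\}$ be any collection of $\THR \circ \THR$ circuits witnessing an $\epsilon$-Approximate Majority for $A_n$, so that for every $x \in \{0,1\}^n$, at least a $(1/2+\epsilon)$-fraction of circuits $C_i \in \mathcal{C}$ satisfy $C_i(x) = A_n(x)$. I would then swap the order of expectation: if $x$ is drawn uniformly from $\{0,1\}^n$ and $i$ is drawn uniformly from $\{1,\ldots,m\}$, the hypothesis gives
\[
\mathop{\E}_{x,i}\bigl[\mathbf{1}[C_i(x)=A_n(x)]\bigr] \geq \tfrac{1}{2}+\epsilon,
\]
and hence by an averaging argument there exists some fixed index $i^\ast$ with
\[
\Pr_x[C_{i^\ast}(x)=A_n(x)] \geq \tfrac{1}{2}+\epsilon.
\]

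Next, I would apply Theorem~\ref{andreev-lb} directly to the single $\THR \circ \THR$ circuit $C_{i^\ast}$: it must contain at least $\Omega(\epsilon^3 n^{3/2}/\log^3 n)$ gates and at least $\Omega(\epsilon^3 n^{5/2}/\log^{7/2} n)$ wires. Since $C_{i^\ast}$ is a single member of the collection $\mathcal{C}$, the total gate and wire counts of $\mathcal{C}$ are at least these quantities, yielding the claimed bounds.

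There is no substantive obstacle; the corollary is essentially a tautological consequence once one notices that approximate-majority correctness for every input implies average-case correctness on one particular circuit. The only mild subtlety is fixing the convention that the ``size'' of the approximate-majority construction refers to the total (or maximum) size across its constituent circuits, which is compatible with the bound above either way since a single $C_{i^\ast}$ already realizes the lower bound.
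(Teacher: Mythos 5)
Your proposal is correct and takes essentially the same approach as the paper: the paper phrases the averaging step as a double-counting contradiction on a circuits-by-inputs Boolean matrix (each column has $\geq(1/2+\eps)$ ones by the approximate-majority property, but each row would have $<(1/2+\eps)$ ones if all circuits were small by Theorem~\ref{andreev-lb}), whereas you extract a single good circuit $C_{i^\ast}$ directly by swapping the order of expectation. These are the same argument; your handling of the gate/wire-count convention (a single $C_{i^\ast}$ already witnesses the lower bound) also matches the paper's remark that the $C_i$'s could share almost all of their gates and wires.
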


This is a partial step towards lower bounds for \emph{depth-three} circuits composed of MAJORITY gates with negations, i.e. the class $\TC^0_3$. It follows from our distribution results that (for example) Andreev's function has no $\TC^0_3$ circuit of $O(n^{1.1})$ gates where the output gate has fan-in $o(n^{2/15})$.

\paragraph{Onward to Depth Three.} However, as stated in Theorem~\ref{Andreev-ub-thm}, Andreev's function has $O(n)$-gate $\TC^0_3$ circuits. To obtain super-linear gate and super-quadratic wire lower bounds in the depth-three setting, we modify Andreev somewhat, defining a new explicit function $B_n$. Informally, $B_n$ has the same inputs $(x,a)$ as $A_n$ with $|x|=|a|$, and as before the function divides its string $a$ into groups and takes parities of each group, but $B_n$ also feeds $x$ into the generator matrix of an $1/\poly(n)$-balanced error-correcting code (i.e. a $1/\poly(n)$-biased set) before calling the multiplexer. This is similar to a function constructed by Komargodski and Raz~\cite{KRaz13}, who also used error-correcting codes in a modification of Andreev's function to prove average-case formula lower bounds. We let $\MAJ \circ\THR\circ\THR$ be the class of circuits which compute a majority value of depth-two threshold circuits.

\begin{theorem}\label{depth3lbThm}
There is no $\MAJ\circ\THR\circ\THR$ circuit of $o(n^{3/2}/\log^3 n)$ gates or $o(n^{5/2}/\log^{7/2}n)$ wires that computes $B_n$.
\end{theorem}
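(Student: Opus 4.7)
The plan is to argue by contradiction using a random restriction. Suppose $D = \MAJ(C_1, \ldots, C_m)$ is a $\MAJ \circ \THR \circ \THR$ circuit of total size $s = o(n^{3/2}/\log^3 n)$ gates that computes $B_n$. I will exhibit a restriction $\rho$ of the $(x, a)$-input variables such that (a) every individual $\THR$ gate in $D$---across the sub-circuits $C_1, \ldots, C_m$ as well as the top $\MAJ$ (which is itself an LTF)---collapses to a constant under $\rho$, forcing $D|_\rho$ to be a constant function; and (b) $B_n|_\rho$ is non-constant. Since $D \equiv B_n$, this is a contradiction.

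For (a), I invoke the paper's random restriction lemma for linear threshold functions (built via Littlewood-Offord anti-concentration). The lemma says that, under a random restriction leaving an appropriately chosen number of variables alive, any single LTF becomes constant except with probability $\tilde O(1/n^{3/2})$. Since $D$ has only $s = o(n^{3/2}/\log^3 n)$ LTF gates in total, a union bound gives total failure probability $s \cdot \tilde O(1/n^{3/2}) < 1$. Hence with positive probability, \emph{every} gate in $D$ simplifies to a constant, and so $D|_\rho$ is a constant function.

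For (b), recall $B_n(x, a) = (Gx)_{\text{idx}(a)} = \langle g_{\text{idx}(a)}, x \rangle$ is a parity of $x$-bits whose coefficient vector is the $\text{idx}(a)$-th row of the $1/\poly(n)$-balanced generator $G$. Using the balance of $G$ (and a mild preprocessing, if needed, to remove short rows), every row $g_i$ has Hamming weight $\Omega(n)$. Under a restriction leaving a constant fraction of the $x$-variables alive, every row's restricted support is therefore nonzero except with probability $1/\poly(n)$. Combined with leaving sufficiently many $a$-variables free to index into a surviving row, $B_n|_\rho$ is a non-trivial parity involving at least one free $x$-variable, hence not constant.

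The main obstacle is the joint quantitative balance: the restriction's alive-fraction must be small enough that each LTF's per-gate failure is $\tilde O(1/n^{3/2})$ (so that the union bound beats the $s \cdot \log^3 n \sim n^{3/2}$ barrier), yet large enough that a constant fraction of $x$-variables survive to keep the parities of $B_n$ non-trivial. This tension between the Littlewood-Offord regime and the code's survival regime is exactly what pins down the $\log^3 n$ factor in the gate bound. The $o(n^{5/2}/\log^{7/2} n)$ wire bound follows the same blueprint applied to the wire-version of the restriction lemma, with the union bound taken over wires instead of gates.
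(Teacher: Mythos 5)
Your proposed strategy---hit the whole circuit with one random restriction and force \emph{every} LTF gate in $D$ to a constant, so that $D|_\rho$ is constant while $B_n|_\rho$ is not---cannot work, because the per-gate failure probability in Lemma~\ref{rr-LTF} is nowhere near as small as you claim. The lemma says an LTF is not forced to a constant with probability $O(|{\cal P}|/\sqrt{n})$, and since $|{\cal P}| \geq 1$ this is at best $\Theta(1/\sqrt{n})$, not $\tilde O(1/n^{3/2})$. (The $n^{3/2}$ exponent appears in Lemma~\ref{wire-rr-Lem}, but there it is the probability of not collapsing to a function of a \emph{single input}, scaled by the gate's fan-in $w$, not the probability of becoming a constant.) Summing $O(|{\cal P}|/\sqrt{n})$ over $s = \Theta(n^{3/2}/\log^3 n)$ gates gives a union-bound total of $\tilde\Omega(n)$, which is useless. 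Indeed it is impossible in principle for a single restriction to force all $\sim n^{3/2}$ gates of a $\THR\circ\THR\circ\THR$ circuit to constants: the Littlewood--Offord regime only buys you a $\sqrt{n}$-factor reduction in the number of live bottom-layer gates, not elimination of the middle and top layers.

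The paper's proof uses the restriction to simplify the bottom layer only, and handles the remaining two layers by a correlation argument rather than by forcing. Concretely: fix $x$ to a ``bad'' codeword guaranteed by Corollary~\ref{badEntryCor}, so that \emph{no} $k$-input $\THR\circ\THR$ circuit with $o(n/\log^2 n)$ gates agrees with $F(\cdot,x)$ on more than a $(1/2 + O(n^{1/2} 2^{-k/4}))$-fraction of inputs. Then restrict across the partition of the $a$-variables; by Lemmas~\ref{rr-LTF} and~\ref{wire-rr-Lem} there is a restriction $\rho$ leaving only $o(n/\log^2 n)$ non-trivial bottom-layer gates. After $\rho$, the function to compute is exactly $F(z,x)$ on the $k$ live bits, while $C|_\rho$ is a $\MAJ$ of $\THR\circ\THR$ circuits each having $o(n/\log^2 n)$ gates. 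Each of those subcircuits has correlation $\ll n^{-3}$ with $F(\cdot,x)$; a majority of $o(n^{5/2})$ such low-correlation functions cannot equal $F(\cdot,x)$ exactly, a contradiction. Note also that the input $x$ is entirely \emph{fixed} in this argument, so your step (b) about keeping $x$-variables alive and getting a live parity does not correspond to the structure of the proof (and the structure you describe would not give you control of the bottom-layer gates, which see both $x$ and $a$).
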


\paragraph{Tight Results for PARITY in Depth-Two.} Finally, we illustrate the strength of our techniques by proving asymptotically tight results on approximating the PARITY function with $\THR \circ \THR$ circuits:

\begin{theorem} \label{parity-approx} The gate complexity of $\MAJ \circ \MAJ$ circuits that agree with PARITY on $99\%$ of all $n$-bit inputs is $\Theta(\sqrt{n})$. The wire complexity is $\Theta(n^{3/2})$. The lower bounds hold even for $\THR \circ \THR$ circuits.
\end{theorem}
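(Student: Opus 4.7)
The upper bound is classical: PARITY on $n$ bits has an exact $\MAJ \circ \MAJ$ circuit of $O(\sqrt n)$ gates and $O(n^{3/2})$ wires, by standard symmetric-function constructions (e.g. partition the inputs into $\sqrt n$ blocks of size $\sqrt n$, compute a threshold on each block, then take a weighted majority on top; or invoke Muroga/Siu-Bruck style constructions for arbitrary symmetric functions). Exact computation trivially gives $99\%$-approximation, so both claimed upper bounds follow immediately.

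For the $\Omega(\sqrt n)$ gate lower bound, the plan is to invoke the random restriction lemma for LTFs that is the paper's main analytical device. Suppose $C$ is a $\THR \circ \THR$ circuit with $s = o(\sqrt n)$ gates that agrees with PARITY on a $0.99$-fraction of inputs. First, apply a random restriction $\rho$ that leaves each variable free with probability $p$ and otherwise sets it to a uniformly random bit, with $p$ tuned so that $\Theta(n)$ variables are free in expectation; crucially, PARITY restricts to PARITY on the free variables, XORed with a constant determined by $\rho$. Next, invoke the random restriction lemma on each bottom LTF gate $g_i$: with probability at least $1 - \delta$ the restriction $g_i|_\rho$ collapses to a trivial function (constant, or constant with overwhelming probability over the remaining free bits). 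Choosing $\delta = o(1/s)$ and taking a union bound, with probability $1 - o(1)$ every bottom gate simplifies, so $C|_\rho$ reduces to essentially a single top LTF in the free variables. A standard bound (Parseval together with the Littlewood-Offord / level-one Fourier mass bound for LTFs) shows that any LTF on $m$ variables has correlation $O(1/\sqrt m)$ with parity on those $m$ variables; for $m = \Theta(n)$ this contradicts the assumed $0.99$-agreement of $C$, since by averaging over $\rho$ the restricted circuit $C|_\rho$ must still achieve $\Omega(1)$-advantage over random on a non-negligible fraction of $\rho$.

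The $\Omega(n^{3/2})$ wire lower bound follows the same template but with the wire-sensitive form of the random restriction lemma already needed for Theorem~\ref{andreev-lb}. Split the bottom gates by fan-in: gates of fan-in $\gg \sqrt n$ account for nearly all the wires and hence are few (at most $w/\sqrt n = o(n)$ of them), while gates of fan-in $\leq \sqrt n$ simplify under the restriction with even higher probability per gate. Choosing $p$ so that both regimes simplify with total failure probability $o(1)$, the same LTF-vs-PARITY correlation contradiction closes the argument.

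The main obstacle is the quantitative random restriction lemma itself: we need per-gate simplification probability $1 - o(1/\sqrt n)$ against an arbitrary LTF, while still retaining $\Theta(n)$ free variables so that the correlation bound $O(1/\sqrt m) = o(1)$ remains sharp enough to yield a contradiction; the wire bound further demands that the simplification probability degrades gracefully as a function of fan-in. Once the appropriate form of the lemma is in hand, the contradiction, parameter tuning, and averaging arguments are routine.
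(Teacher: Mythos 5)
Your upper bound claim is incorrect. There is no \emph{exact} $\MAJ \circ \MAJ$ circuit for PARITY with $O(\sqrt n)$ gates: Paturi and Saks proved an $\Omega(n/\log^2 n)$ gate lower bound for exact $\MAJ \circ \MAJ$ computation of PARITY, and Impagliazzo--Paturi--Saks proved $\Omega(n^{1/2})$ gates and $\Omega(n^{3/2})$ wires are required for exact $\THR \circ \THR$ computation (both results are cited in the paper's own history section). The ``block-and-threshold'' or ``Siu--Bruck symmetric'' constructions you gesture at use $\Theta(n)$ gates, not $\Theta(\sqrt n)$. The theorem explicitly concerns $99\%$ \emph{approximation}, and the $O(\sqrt n)$ upper bound crucially exploits the slack: the paper's construction takes exact-threshold indicators $L_k - L_{k+1}$ only for the $O(\sqrt n)$ values of $k$ in the window $[n/2 - c\sqrt n,\ n/2 + c\sqrt n]$, where the Hamming weight of a random input is concentrated, and ORs the even ones; this is correct whenever $\sum_i x_i$ falls in the window, i.e., on a $99\%$ fraction of inputs for suitable $c$.

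Your lower bound plan also has a parameter mismatch that you yourself flag at the end but do not resolve, and it cannot be resolved within your framing. The restriction lemma (Lemma~\ref{rr-LTF}) gives per-gate failure probability $O(|\mathcal P|/\sqrt n)$ where $|\mathcal P|$ is the number of surviving variables. You want $\Theta(n)$ survivors so that the $O(1/\sqrt m)$ LTF-vs-PARITY correlation bound is $o(1)$, but then $O(|\mathcal P|/\sqrt n) = O(\sqrt n) \gg 1$ and the lemma gives nothing; conversely, to get failure probability $o(1/s) = o(1/\sqrt n)$ you would need $|\mathcal P| = o(1)$, i.e., no survivors at all. The product-form $p$-biased restriction you substitute has the same problem (for MAJORITY with $\Theta(n)$ free variables, the non-forcing probability is $\Theta(\sqrt n)$ by Littlewood--Offord). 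The paper's resolution is to abandon the $O(1/\sqrt m)$ correlation bound entirely: it restricts down to exactly \emph{two} free variables ($|\mathcal P| = 2$), so the expected number of non-trivial bottom gates is $O(\min(s/\sqrt n,\ w/n^{3/2})) = o(1)$, then applies Markov (not a union bound) to conclude that with probability $\geq 1/2$ the restricted circuit is a single LTF on two inputs, and finally invokes the elementary Minsky--Papert fact that a single LTF cannot agree with two-bit XOR on more than $75\%$ of its four inputs. Averaging over restrictions then contradicts $99\%$ agreement. The wire bound is handled the same way via Lemma~\ref{wire-rr-Lem} with $|\mathcal P| = 2$; no separate case analysis by fan-in is needed.
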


Theorem~\ref{parity-approx} shows that the $\Omega(n^{3/2})$ wire lower bound and $\Omega(n^{1/2})$ gate lower bound for PARITY proved by Impagliazzo, Paturi, and Saks~\cite{ImpagliazzoPS97} are both tight in the average case.

\subsection{Intuition}

The key to our lower bounds is a new random restriction lemma for LTFs. With random restrictions, one generally studies the probability that a distribution of partial Boolean assignments to the inputs of a circuit ``forces'' many gates of that circuit to output a fixed value on all remaining inputs. The idea of forcing is very natural for circuits made of AND and OR gates: an AND is ``forced to $0$'' when one of its inputs is assigned $0$, and an OR is ``forced to $1$'' when one of its inputs is $1$. As a result, random restrictions have been rather effective for analyzing $\AC^0$ circuits made out of unbounded fan-in AND and OR gates (for example~\cite{FurstSS84,Yao85,Hastad86,Rossman08}), as well as formulas over the AND/OR/NOT basis (for example~\cite{S61,Andreev87,Paterson-Zwick93, Hastad98}). Strong average-case lower bounds for $\AC^0$ and formula size are also known; some have only been proved very recently~\cite{Ajtai-Wigderson85,KRaz13,Impagliazzo-Meka-Zuckerman12,KRazT13,Hastad14,RossmanST15}.

For linear threshold functions, the notion of ``forcing to a constant'' is more subtle. There are two ways we could conclude that a partially restricted LTF is equivalent to a constant function: either the partial Boolean assignment makes part of the LTF so large that the threshold value is achieved on all remaining inputs, or the assignment makes the LTF so small that the threshold value is never achieved. Hence a threshold function can be ``forced to $0$'' in some cases, and $1$ in other cases. Also, note that if an LTF only depends non-trivially on a single variable after a restriction (that is, no other variables can affect the output value), then the LTF gate can be removed, and replaced with a single wire coming from that single variable. We incorporate both kinds of reasoning in our arguments.

In order for our analysis to work, we need to consider random restrictions of a structured yet ``adversarial'' form. In particular, let  $\mathcal{P}$ be an \emph{arbitrary} partition of $\{1,2,\ldots,n\}$ into equal parts, and let ${\cal R}_{\cal P}$ be the distribution of random restrictions on $n$ Boolean variables which randomly fixes all but one element of each part of $\cal P$. For our applications, one should think of $\cal P$ as partitioning $[n]$ into $k$ blocks of size $n/k$ for some $k\ll n$, so that ${\cal R}_{\cal P}$ roughly corresponds to fixing all but $k$ randomly chosen variables.

\begin{lemma}[Random Restrictions on Linear Threshold Functions] \label{rr-LTF} Let $f:\{0,1\}^n \rightarrow \{0,1\}$ be a linear threshold function. Let $\cal P$ be a partition of $[n]$ into parts of equal size, and let ${\cal R}_{\cal P}$ be the distribution on restrictions $\rho : [n] \rightarrow \{0,1,\star\}$ that randomly fixes all but one element of each part of $\cal P$. Then \[\Pr_{\rho \sim {\cal R}_{\cal P}}[f \text{ is not forced to a constant by $\rho$}] = O(|{\cal P}|/\sqrt{n}).\]
\end{lemma}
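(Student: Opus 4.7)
The plan is to reduce non-constancy of $f|_\rho$ to an explicit anti-concentration event on a random weighted Bernoulli sum, and then bound that event using the Littlewood-Offord lemma (in Berry-Esseen form) together with a short Cauchy-Schwarz-based expectation calculation.

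Writing $f(x) = 1$ iff $\sum_i w_i x_i \geq t$ and normalizing $\sum_i w_i^2 = 1$, let $k = |{\cal P}|$, $m = n/k$, and let $F = \{i_1, \ldots, i_k\}$ be the random free set under $\rho$, with $i_j \in {\cal P}_j$ uniform. Put $S := \sum_{i \notin F} w_i \rho(i)$, $W_+ := \sum_{j:\, w_{i_j}>0} w_{i_j}$, $W_- := \sum_{j:\, w_{i_j}<0} w_{i_j}$, and $L_F := W_+ - W_- = \sum_j |w_{i_j}|$. A direct check shows $f|_\rho$ is not forced to a constant precisely when $S \in [t - W_+, t - W_-)$, an interval of length $L_F$: the LTF attains value $1$ on some extension iff $S + W_+ \geq t$, and value $0$ on some extension iff $S + W_- < t$.

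Conditional on $F$, $S$ is a sum of $n - k$ independent Bernoulli-scaled weights, so Littlewood-Offord (in its Berry-Esseen form for Bernoulli sums) gives
\[
\Pr_\rho[\, S \in I_F \mid F \,] \;\leq\; C \cdot \frac{L_F + \max_{i \notin F} |w_i|}{\|w_{F^c}\|_2}.
\]
By Cauchy-Schwarz, $L_F^2 \leq k \|w_F\|_2^2$, and $\E[\|w_F\|_2^2] = \sum_j (1/m)\sum_{i\in{\cal P}_j} w_i^2 = 1/m$, so $\E[L_F] \leq k/\sqrt n$. Markov then gives $\|w_{F^c}\|_2 \geq 1/\sqrt 2$ except on an event of probability $\leq 2k/n$, on which I just use the trivial bound $1$, contributing $O(k/n) \leq O(k/\sqrt n)$.

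The main obstacle is controlling the $\max_{i \notin F} |w_i|/\|w_{F^c}\|_2$ term in the Berry-Esseen bound, which can be much larger than $k/\sqrt n$ when the LTF has a very irregular weight distribution (e.g.\ a single dominant weight). I handle this by a case split at threshold $\tau \asymp k/\sqrt n$: let $H := \{i : |w_i| > \tau\}$ be the heavy indices. In the case $\|w_{H^c}\|_2^2 \geq 1/2$, condition on the $H$-coordinates of $\rho$ and apply Berry-Esseen to the remaining sum over $H^c \cap F^c$, whose maximum weight is $\tau = O(k/\sqrt n)$, yielding an error of $O(k/\sqrt n)$. In the case $\|w_H\|_2^2 \geq 1/2$, condition on the $H^c$-coordinates and apply the classical Littlewood-Offord inequality to the $H \cap F^c$ sum, exploiting that all of its weights are at least $\tau$: the resulting concentration-function bound $O((L_F/\tau + 1)/\sqrt{|H \cap F^c|})$ evaluates to $O(L_F + k/\sqrt n)$ once one verifies $|H \cap F^c| = \Omega(|H|)$. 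In either case the conditional probability is $O(L_F + k/\sqrt n)$, and taking expectation over $F$ gives the claimed $O(k/\sqrt n)$. The most delicate part of the argument is threading this case analysis and checking that $|H \cap F^c|$ stays $\Omega(|H|)$ after random freeing; this follows because $|H|/m$ is small under the threshold choice.
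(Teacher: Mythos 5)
Your reduction of ``not forced'' to the event $S \in [t-W_+,\, t-W_-)$, an interval of length $L_F=\sum_{i\in F}|w_i|$, is exactly the paper's starting point, and your Cauchy--Schwarz computation $\E_F[L_F]\le k/\sqrt{n}$ (with $k=|{\cal P}|$) is correct and plays the same role as the paper's accounting of the interval length. Case A of your split (most $\ell_2$ mass on weights below $\tau$) is also sound. The gap is in Case B. There you apply Littlewood--Offord only to the heavy sum over $H\cap F^c$, and for the resulting bound $O\bigl((L_F/\tau+1)/\sqrt{|H\cap F^c|}\bigr)$ to be $O(L_F+k/\sqrt{n})$ you need $|H\cap F^c|=\Omega(n/k^2)$, i.e.\ $|H|$ must already be at the maximum size $\asymp 1/\tau^2$ permitted by $\sum_i w_i^2=1$. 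The condition you propose to verify, $|H\cap F^c|=\Omega(|H|)$, does not deliver this, because $|H|$ itself can be $O(1)$. Concretely, take $w_1=w_2=1/\sqrt{2}$ and $w_i=n^{-10}$ for $i\ge 3$: then $H=\{1,2\}$, you are in Case B, and after conditioning on the light coordinates the heavy sum takes one of three values, each with constant probability, so your concentration-function bound is $\Omega(1)$ and vacuous. The randomness needed in this regime lives precisely in the light coordinates you conditioned away: the true bad-event probability is still $O(k/\sqrt{n})$, witnessed by the light sum having to hit an interval of length about $kn^{-10}$. In short, splitting by where the $\ell_2$ mass sits is the wrong criterion, because anti-concentration can be supplied entirely by coordinates carrying a negligible fraction of the variance.

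The paper avoids this by never partitioning the coordinates at all: it writes the length-$L_F$ interval as a disjoint union of subintervals $I_i$ of length $|w_i|$, one per freed coordinate $i\in F$, and bounds $\Pr[S\in I_i]$ by applying the Erd\H{o}s--Littlewood--Offord lemma against the set of \emph{fixed} coordinates $j$ with $|w_j|\ge |w_i|$, of size $k_i'$. This pairs each piece of the interval with exactly the comparison set that makes the lemma effective, uniformly over all weight profiles, and the final step $\E_F\bigl[\sum_{i\in F}O(1)/\sqrt{k_i}\bigr]=O(|{\cal P}|/\sqrt{n})$ is the analogue of your $\E[L_F]$ computation. To repair your argument you would need something like this per-coordinate interval decomposition (or an induction peeling off dominant weights), not merely the estimate $|H\cap F^c|=\Omega(|H|)$.
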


For example, given a partition of the inputs with at most $\sqrt{n}/1000$ parts, it is very likely that $f$ is equivalent to a constant function, when a random element of each part is left unrestricted and all other inputs are assigned to random bits.

The primary tool in our random restriction analysis is a well-known result in additive combinatorics by Littlewood and Offord~\cite{Littlewood-Offord43,Erdos1945} which (tightly) upper bounds the probability (on a random Boolean assignment) that a linear function takes on a value in a given interval of length $2$. We can use this lemma to closely estimate the probability that an LTF is ``forced to a constant'' when some of its inputs are randomly set to $0$ or $1$. The central intuition is that a linear threshold function is not ``forced to a constant'' by a partial Boolean assignment precisely when the restricted part of the linear function lies in a certain integer interval, certifying that the restricted part is neither ``too high'' to always exceed the threshold, nor ``too low'' to always fail to meet the threshold. Adapting Littlewood-Offord to this event requires some care, as we need to compute probability upper bounds for (potentially) large intervals defined by the LTF.

Another key idea in our lower bound proofs is the fact that there exist relatively hard functions for low-depth threshold circuits. This is obtained by a counting argument, showing that since there are few distinct functions computed by small threshold circuits, there must be hard functions which cannot be computed by any of them (or any small group of them).

Let us illustrate these two ideas, by sketching the proof of Theorem~\ref{andreev-lb}. By plugging in the correct bits of a hard function into a multiplexer, we can force our candidate circuit $C$ to compute the value of such a hard function. (This step was already known to Andreev~\cite{Andreev87}.) This would produce lower bounds of order $\tilde \Omega(n)$. However, the circuit $C$ must still be capable of computing the hard function, even after we have restricted it to $O(\log n)$ inputs. Since $C$ must still be large after a restriction has killed all but a $1/\sqrt{n}$ fraction of its gates, it must have originally had size $\tilde \Omega(n^{3/2})$, by our Random Restriction Lemma.

\paragraph{Comparison with Impagliazzo, Paturi, and Saks.} In the previous best known lower bounds on $\THR$ circuits of depth at least two, Impagliazzo, Paturi, and Saks~\cite{ImpagliazzoPS97} also used a random restriction method to prove their lower bounds. There are several critical differences between their approach and ours. Their main lemmas state that for any $\LTF$ circuit with $n$ variables and $\delta n$ wires, there is a variable restriction which leaves $\Omega(n/\delta^2)$ variables unset and makes \emph{every} bottom-layer gate dependent on at most one variable. First, observe that their lemmas are nontrivial only when the number of wires is $O(n^{3/2})$. Second, the proofs of their lemmas require that they pick a particular \emph{random} partition of variables in which the restriction is performed; we will need to allow an adversary to control the partition in order for our analysis to work. Third, instead of insisting that every bottom-layer gate is reduced, we use Littlewood-Offord to calculate the probability that single gate is either forced to a constant (Lemma~\ref{rr-LTF}) or depends on at most one variable (Lemma~\ref{wire-rr-Lem}). Fourth, we use our relaxed setting to incorporate strong $\THR \circ \THR$ lower bounds for random functions in our arguments to gain an extra linear factor in the gate and wire lower bounds (Theorem~\ref{andreev-lb}), and we add another layer of complexity to the function to insert correlation-style arguments to gain an extra layer of circuit depth (Theorem~\ref{depth3lbThm}).

\section{Preliminaries}

We denote assignments to $n$ Boolean variables by functions of the form $\tau : [n] \rightarrow \{0,1\}$.

In our proofs, a \emph{linear threshold function} (LTF) is defined by a pair $(L,t)$ where $t\in \R$ and $L : \{0,1\}^n \rightarrow \Z$ is a function of the form $L(x_1,\ldots,x_n) = \sum_{i=1}^n a_i x_i$ where all $a_i \in \R$. The \emph{output} of $(L,t)$ on an assignment $\tau : [n] \rightarrow \{0,1\}$ is $\llbracket L(\tau(1),\ldots,\tau(n)) \geq t \rrbracket$, where $\llbracket P \rrbracket$ is notation for the function which outputs $1$ when property $P$ is true and $0$ when $P$ is false.

\paragraph{The Littlewood-Offord Lemma.} We will apply a classical result of Littlewood and Offord~\cite{Littlewood-Offord43} which upper bounds the number of inputs to a linear function $L(x)=\sum_{i=1}^t a_i x_i$ so that the output lies in a given interval $I$ of length $2$. In particular, if at least $n$ of the $a_i$ have $|a_i| \geq 1$, and if $x$ is a random point in $\{-1,1\}^t$, then $\Pr_x[L(x)\in I] \leq O(\log(n)/\sqrt{n})$. The bound was later improved to $O(1/\sqrt{n})$ by Erd\H{o}s~\cite{Erdos1945}. By scaling the length of the interval $I$, we obtain the following:

\begin{lemma}\label{LOLem}
Let $L(x) =\sum_{i=1}^n a_i x_i$ be a linear function, and $k \in \N$. Let $I \subset \R$ be a finite interval, and suppose that $|a_i| \geq |I|$ for at least $k$ of the $a_i$. Then
$$
\Pr_{x \in_u \{0,1\}^n}\left[L(x)\in I\right] \leq \frac{O(1)}{\sqrt{k}},
$$ where $\in_u$ denotes a uniform random choice.
\end{lemma}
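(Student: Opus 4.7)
The plan is to reduce directly to the classical Erd\H{o}s form of the Littlewood--Offord inequality, which states that if $b_1,\dots,b_k$ are real numbers with $|b_i|\geq 1$ for all $i$, then for any interval $J$ of length at most $2$, $\Pr_{y\in_u\{-1,1\}^k}\bigl[\sum_{i=1}^k b_i y_i \in J\bigr] \leq O(1/\sqrt{k})$.

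First I would eliminate the $\{0,1\}$ vs.\ $\{-1,1\}$ mismatch via the affine substitution $x_i = (1+y_i)/2$ for $y_i \in \{-1,1\}$. Under this substitution, $L(x) = \tfrac{1}{2}\sum_i a_i + \tfrac{1}{2}\sum_i a_i y_i$, so the event $L(x)\in I$ becomes $\sum_i a_i y_i \in I'$ for some interval $I'$ with $|I'| = 2|I|$. Now rescale by dividing through by $|I|$: setting $b_i := a_i/|I|$ and $J := I'/|I|$, the event becomes $\sum_i b_i y_i \in J$, where $|J| = 2$ and by hypothesis at least $k$ of the coefficients $b_i$ satisfy $|b_i|\geq 1$. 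The uniform measure on $\{0,1\}^n$ corresponds to the uniform measure on $\{-1,1\}^n$, so the probability we need to bound is unchanged.

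Second, I would handle the fact that only $k$ (not all $n$) coefficients are guaranteed to be large. Let $S\subseteq[n]$ be a set of exactly $k$ indices on which $|b_i|\geq 1$. Condition on an arbitrary fixed assignment to the $y_i$ with $i\notin S$; this shifts the target interval by a constant, yielding a new interval $J'$ of the same length $2$. The conditional probability is then
\[
\Pr_{y_S \in_u \{-1,1\}^k}\Bigl[\sum_{i\in S} b_i y_i \in J'\Bigr],
\]
which is exactly the setting of the classical theorem on $k$ variables with all coefficients of magnitude at least $1$ and an interval of length $2$. Hence this conditional probability is $O(1/\sqrt{k})$, and averaging over the fixed coordinates preserves the bound.

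There is no real obstacle here; the only things to be careful about are (i) the factor of $2$ picked up when passing from $\{0,1\}$ to $\{-1,1\}$, which is exactly absorbed because the classical bound allows intervals of length up to $2$, and (ii) the fact that the hypothesis gives only $k$ large coefficients rather than all of them, which is handled by the conditioning step above.
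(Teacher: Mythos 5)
Your proof is correct and takes essentially the same route as the paper: an affine substitution $\{0,1\}\leftrightarrow\{-1,1\}$ plus a rescaling by $|I|$ to reduce to the Erd\H{o}s form of Littlewood--Offord. The one place you are more careful than the paper is the conditioning step that handles ``at least $k$ of the $n$ coefficients are large'' (the paper cites a version of the lemma that already allows this); that extra step is correct and makes the reduction fully explicit.
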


\begin{proof} Start with the original lemma: assume at least $k$ of the $a_i$ have $|a_i| \geq 1$, let $I$ be an arbitrary interval of length $2$, and obtain $\Pr_{x \in \{-1,1\}^t}[L(x)\in I] \leq O(1/\sqrt{k})$. What follows is some simple massaging of this statement.

If we add $\sum_{i=1}^n a_i$ to the function $L(x)$, then we can let each $x_i$ be chosen from $\{0,2\}$. Then, shifting the interval $I$ by $\sum_{i=1}^n a_i$ (defining an interval $I'$ of length $2$), we still have  $\Pr_{x \in \{0,2\}^n}[L(x)\in I'] \leq O(1/\sqrt{k})$. Dividing everything by $1/2$, the interval $I'$ becomes length $1$, and we obtain $\Pr_{x \in \{0,1\}^n}[L(x)\in I'] \leq O(1/\sqrt{k})$. Finally, if we multiply the $a_i$'s and $I'$ by any desired interval length $\ell \in \R^{+}$, we can accommodate any finite interval, as long as at least $k$ of the $a_i$'s have absolute value at least the interval length.
\end{proof}

\paragraph{Constant-Depth Threshold Lower Bounds for Random Functions}
Another ingredient in our main result is a threshold circuit lower bound for random functions. This follows from a counting argument via a non-trivial upper bound on the number of distinct functions computable with low-depth LTF circuits. The upper bound on the number of possible LTFs has been proved many times; the earliest reference we have found is Winder~\cite{Winder62} from the 1st Annual FOCS:

\begin{theorem}[\cite{Winder62}] \label{number-thr} The number of linear threshold functions on $n$ variables is at most $2^{O(n^2)}$.
\end{theorem}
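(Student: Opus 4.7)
The plan is to count LTFs geometrically, by mapping each LTF to a cell in a hyperplane arrangement in ``weight space''. Each LTF on $n$ variables is specified by a tuple $(w_1,\ldots,w_n,t) \in \R^{n+1}$, and two tuples $(w,t)$ and $(w',t')$ define the same LTF if and only if for every $x \in \{0,1\}^n$ the quantities $\sum_i w_i x_i - t$ and $\sum_i w'_i x_i - t'$ have the same sign (both strictly positive, both strictly negative, or both zero). For each $x \in \{0,1\}^n$, the equation $\sum_i w_i x_i = t$ defines a hyperplane $H_x$ in $\R^{n+1}$, so the number of distinct LTFs on $n$ variables is at most the total number of faces (of all dimensions) in the arrangement $\mathcal{A} := \{H_x : x \in \{0,1\}^n\}$.

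Next I would invoke the classical cell-counting bound for hyperplane arrangements: any arrangement of $m$ hyperplanes in $\R^d$ partitions $\R^d$ into at most $\sum_{i=0}^{d} \binom{m}{i}$ open full-dimensional regions, and the total number of faces (cells of every dimension) is $O(m^d)$. Substituting $m = 2^n$ and $d = n+1$ gives a bound of $O((2^n)^{n+1}) = 2^{n(n+1) + O(n)} = 2^{O(n^2)}$ on the number of LTFs on $n$ Boolean inputs, which is exactly the claim.

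Two small technical points are worth flagging. First, the arrangement $\mathcal{A}$ is not in general position (many of the $H_x$ share common intersections through the origin in weight space), but the face-count bound $\sum_i \binom{m}{i}$ is an upper bound for \emph{arbitrary} arrangements; genericity only controls whether the bound is tight, so we do not need to worry about degeneracies. Second, the paper's convention uses weak inequality $L(x) \geq t$, so boundary tuples lying on some $H_x$ still yield valid LTFs; but each face of the arrangement contributes at most one such LTF, and counting all faces (not only open regions) absorbs these boundary cases. The main step requiring care is invoking the face-count bound correctly for a non-generic arrangement; beyond that, the argument is a one-line geometric observation followed by a classical combinatorial-geometry inequality, so no substantial obstacle remains.
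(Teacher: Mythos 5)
Your proof is correct, but it takes a genuinely different route from the paper's. The paper derives the $2^{O(n^2)}$ bound from Chow's theorem (Theorem~\ref{chow}): every LTF is uniquely determined by its $n+1$ degree-$\leq 1$ Fourier coefficients, each of which is an average of $\pm 1$ values over a $2^n$-point sample space and hence takes at most $O(2^n)$ distinct values, giving $O(2^n)^{n+1} = 2^{O(n^2)}$ LTFs. This Chow-parameter route has the virtue that it generalizes cleanly to the depth-two setting; the paper's proof of Theorem~\ref{number-ckts} is exactly the analogous Knuth-style encoding $(|S(g)|, \Sigma(g))$ for a threshold of arbitrary Boolean functions. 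Your argument instead works in the $(n+1)$-dimensional weight space $\R^{n+1}$: the tuple $(w,t)$ determines the LTF through its sign pattern relative to the arrangement of $2^n$ hyperplanes $H_x = \{(w,t) : \sum_i w_i x_i = t\}$, and the number of distinct sign patterns is bounded by the total face count of the arrangement, which is $O((2^n)^{n+1}) = 2^{O(n^2)}$. This is closer to the historical (Winder-era) argument and needs nothing about LTFs beyond their geometric definition, at the cost of invoking the face-count bound for hyperplane arrangements. One small imprecision worth noting: your claim that two tuples define the same LTF ``if and only if'' they induce the same ternary sign pattern is false in the ``only if'' direction under the $\geq$ convention --- a tuple lying on $H_x$ and a tuple strictly on the positive side of $H_x$ can define the same LTF, since the threshold test only distinguishes $\geq 0$ from $<0$. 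But only the ``if'' direction (same sign pattern $\Rightarrow$ same LTF) is needed to conclude that the number of LTFs is at most the number of faces, so the upper bound still goes through. An even cleaner variant of your argument: perturb $t$ to $t - \epsilon$ for small $\epsilon > 0$ so that every LTF has a representative in an \emph{open} region, reducing the count to the region bound $\sum_{i=0}^{n+1}\binom{2^n}{i}$ and sidestepping lower-dimensional faces entirely.
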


The $2^{O(n^2)}$ upper bound immediately follows from Chow's theorem (from the 2nd Annual FOCS), which characterizes linear threshold functions by their low-degree Fourier coefficients.

\begin{theorem}[Chow~\cite{Chow61}] \label{chow} Every LTF $f$ on $n$ variables is uniquely determined by its $n+1$ Fourier coefficients $\hat{f}(\emptyset),\hat{f}(1),\ldots,\hat{f}(n)$.
\end{theorem}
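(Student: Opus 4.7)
\medskip
My plan is a short pointwise-optimality argument driven by the observation that an LTF is literally the pointwise maximizer of $\mathbb{E}[H \cdot L]$ over Boolean $H$, where $L$ is the affine functional defining it. Switching to the $\{\pm 1\}$ Fourier picture, recode $f$ as $F : \{\pm 1\}^n \to \{\pm 1\}$ with $F(x) = \mathrm{sign}(L(x))$ where $L(x) = c_0 + \sum_{i=1}^n c_i x_i$, and let $G : \{\pm 1\}^n \to \{\pm 1\}$ be any Boolean function with $\hat{G}(\emptyset) = \hat{F}(\emptyset)$ and $\hat{G}(\{i\}) = \hat{F}(\{i\})$ for every $i$. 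The goal is to conclude $F \equiv G$.

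First I would eliminate ties: since the set $\{L(x) : x \in \{\pm 1\}^n\}$ is finite, I can shift the threshold by a tiny amount so that $L(x) \neq 0$ for every $x$ without altering the function $F$ (any tie points of the original LTF share a common output value by well-definedness, so they can be nudged uniformly to one side of the threshold). Henceforth assume $L(x) \neq 0$ everywhere.

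Because $L$ lies in the linear span of the degree-$\le 1$ characters $\{1, x_1, \ldots, x_n\}$, the hypothesis on matching low-degree Fourier coefficients gives
\[\mathbb{E}_x\bigl[(F(x) - G(x)) \cdot L(x)\bigr] = c_0 \bigl(\hat{F}(\emptyset) - \hat{G}(\emptyset)\bigr) + \sum_{i=1}^n c_i \bigl(\hat{F}(\{i\}) - \hat{G}(\{i\})\bigr) = 0.\]
On the other hand, by the sign definition $F(x) L(x) = |L(x)|$ pointwise, while $G(x) L(x) \leq |L(x)|$ since $|G(x)| = 1$. Hence the integrand $(F(x) - G(x)) L(x)$ is pointwise non-negative, and a sum of non-negative terms equal to zero must vanish termwise. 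Since $L(x) \neq 0$, this forces $F(x) = G(x)$ for every $x$.

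The only step that requires any thought is the tie-elimination perturbation, and even that is a one-liner because ties of an LTF all share the same output value. The rest is a routine duality computation: $F$ is the pointwise maximizer of $H \mapsto \mathbb{E}[H \cdot L]$ over Boolean $H$, and this inner product is determined by the $n+1$ Fourier coefficients listed, so any Boolean function $G$ with the same low-degree spectrum must also achieve this maximum and therefore coincide with $F$ everywhere.
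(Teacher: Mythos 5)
Your proof is correct, and it is the standard Fourier-analytic proof of Chow's theorem; the one delicate point, inputs with $L(x)=0$, is handled legitimately, since under the paper's convention $f(x)=\llbracket L(x)\ge t\rrbracket$ all tie points receive the same output and the threshold can be nudged without changing $f$. The paper does not prove Theorem~\ref{chow} inline (it cites Chow, O'Donnell--Servedio, and Knuth), but it does give a self-contained proof of the generalization, Theorem~\ref{number-ckts}, via Knuth's exchange argument: the pair $(|S(g)|,\Sigma(g))$ determines the threshold function $g$, because two functions agreeing on this pair but disagreeing somewhere would yield exchanged point sets $\{y_i\}$, $\{z_i\}$ with $\langle w,\sum_i y_i\rangle\ge kt>\langle w,\sum_i z_i\rangle$ yet $\sum_i y_i=\sum_i z_i$. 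Specialized to $f_i=x_i$, the data $(|S(g)|,\Sigma(g))$ is an affine re-encoding of exactly the coefficients $\hat g(\emptyset),\hat g(1),\ldots,\hat g(n)$, so the two arguments are dual formulations of the same fact --- that an LTF is the essentially unique Boolean maximizer of $H\mapsto \E[H\cdot L]$ --- which your version states explicitly in the Fourier basis and the paper's version expresses combinatorially by summing the exchanged points. Both routes deliver the strong form of the theorem (uniqueness among all Boolean functions, not merely among LTFs), since in each case only one of the two functions needs to be a threshold function; your presentation makes that, and the role of low-degree characters, somewhat more transparent, while the paper's packaging is what generalizes cleanly to counting depth-two circuits.
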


To see why the $2^{O(n^2)}$ upper bound follows from Theorem~\ref{chow},
observe that each Fourier coefficient of a Boolean function can take on at most $O(2^n)$ values, because it is an expectation of a random variable taking values in $\{-1,1\}$, over a $2^n$ sample space. Combined with Chow's Theorem, the number of LTFs on $n$ variables is at most $O(2^{n})^{n+1} \leq 2^{O(n^2)}$. For a proof, see (for example) O'Donnell and Servedio~\cite{ODonnellS11}, or Knuth~(\cite{KnuthVol4A}, Theorem T) who states the theorem slightly differently (the language of Chow, in fact).

A considerable generalization of Winder's theorem was given by Roychowdhury, Siu, and Orlitsky:

\begin{theorem}[\cite{RoychowdhuryOS93}] \label{number-ckts} Let ${\cal F} = \{f_1,\ldots,f_s\}$ be a fixed collection of functions of the form $f_i :\{0,1\}^n \rightarrow \{0,1\}$. Then there are at most $(2^n + 1)^{s+1}$ distinct functions $g:\{0,1\}^n\rightarrow \{0,1\}$ of the form \[g(x_1,\ldots,x_n) = \left\llbracket\sum_{i=1}^s w_i \cdot f_i(x_1,\ldots,x_n) \geq t\right\rrbracket,\] where $w_1,\ldots,w_s, t \in \R$ and $\llbracket P \rrbracket$ is notation for the function which outputs $1$ when property $P$ is true and $0$ when $P$ is false. That is, there are $O(2^{ns+s})$ threshold functions over $s$ input functions. As a consequence, the number of depth-$2$ linear threshold circuits of $s \geq n$ gates and $n$ inputs is at most $2^{O(n^2 s)}$.
\end{theorem}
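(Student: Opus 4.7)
The plan is to generalize Chow's theorem (Theorem~\ref{chow}) from a single LTF on $n$ inputs to a threshold function over the fixed basis $\mathcal{F} = \{f_1, \ldots, f_s\}$. Concretely, I will prove that any $g$ of the form $g = \llbracket \sum_i w_i f_i \geq t \rrbracket$ is uniquely determined, among all Boolean functions $h:\{0,1\}^n \to \{0,1\}$, by the $s+1$ integer-valued correlations $\langle g, f_i \rangle := \sum_x g(x) f_i(x)$ for $i = 1,\ldots,s$, together with $\langle g, 1 \rangle := \sum_x g(x)$. Since each such correlation is an integer in $\{0, 1, \ldots, 2^n\}$, this immediately yields the count $(2^n + 1)^{s+1}$ of possible tuples, and hence of threshold functions over $\mathcal{F}$.

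To prove the uniqueness claim, suppose $h$ has the same $s+1$ correlations as $g$ and let $L(x) = \sum_i w_i f_i(x)$. By linearity the hypotheses give
\[
S \;:=\; \sum_x (g(x) - h(x))(L(x) - t) \;=\; \sum_i w_i \sum_x (g(x) - h(x)) f_i(x) \;-\; t \sum_x (g(x) - h(x)) \;=\; 0.
\]
I then argue termwise: by definition of $g$, if $g(x) = 1$ then $L(x) - t \geq 0$, and if $g(x) = 0$ then $L(x) - t < 0$. So in the case $g(x) = 1, h(x) = 0$ the summand $(g(x)-h(x))(L(x)-t) = L(x) - t \geq 0$, and in the case $g(x) = 0, h(x) = 1$ the summand equals $t - L(x) > 0$ strictly. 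Since $S = 0$ and every summand is nonnegative, no $x$ with $g(x) = 0, h(x) = 1$ can exist, i.e.\ $h \leq g$ pointwise. Combined with $\sum_x g = \sum_x h$, this forces $h = g$, establishing injectivity of the map $g \mapsto (\langle g, f_1\rangle, \ldots, \langle g, f_s\rangle, \langle g, 1\rangle)$ and thus the bound $(2^n+1)^{s+1}$.

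For the stated consequence on depth-two LTF circuits with $s \geq n$ gates: the at most $s$ bottom-layer gates are each LTFs on $n$ variables, so by Theorem~\ref{number-thr} the number of choices for the entire bottom layer is at most $(2^{O(n^2)})^{s} = 2^{O(n^2 s)}$. Once the bottom layer is fixed, the output gate is a threshold over at most $s + n \leq 2s$ input functions (the bottom gates plus the $n$ raw inputs), so by the main bound just proved the number of choices for the output gate is at most $(2^n+1)^{2s + 1} = 2^{O(ns)}$. Multiplying gives $2^{O(n^2 s)} \cdot 2^{O(ns)} = 2^{O(n^2 s)}$, as required.

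The essential content is the generalized Chow argument of the second paragraph; the nonnegativity trick $\sum_x (g-h)(L-t) \geq 0$ is the only nontrivial step and is really just the observation that $(g(x) - h(x))$ and $(L(x) - t)$ always have the same sign (or at least are not strictly opposite in sign) when $g$ is defined by thresholding $L$. Everything else is bookkeeping and a direct appeal to Theorem~\ref{number-thr}.
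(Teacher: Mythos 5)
Your proof is correct, and it is a Chow-type argument just as the paper's is, but your decomposition is genuinely different and in one respect cleaner. The paper works in the \emph{image space}: it sets $S(g)\subseteq\{0,1\}^s$ to be the set of image points $y$ of $(f_1,\ldots,f_s)$ where the top gate outputs~$1$, lets $\Sigma(g)\in\N^s$ be the componentwise sum of those vectors, and shows that the pair $(|S(g)|,\Sigma(g))$ determines the composed function among all compositions of a top LTF with the fixed $f_i$'s; the contradiction is derived from $\langle w,\sum y_i\rangle\geq kt > \langle w,\sum z_i\rangle$ when $\sum y_i=\sum z_i$. You instead work directly over the domain $\{0,1\}^n$: your invariants are the correlations $\langle g,f_i\rangle=\sum_x g(x)f_i(x)$ and $\langle g,1\rangle$, and your uniqueness statement is stronger --- $g$ is recovered among \emph{all} Boolean functions $h$ on $\{0,1\}^n$, not merely among other threshold compositions --- via the clean identity $\sum_x (g-h)(L-t)=0$ with termwise nonnegativity and strict positivity exactly on the points where $g=0,h=1$. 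Both invariant tuples live in $\{0,\ldots,2^n\}^{s+1}$, so both give the $(2^n+1)^{s+1}$ count. Your handling of the depth-two consequence is also sound (and correctly accounts for the $n$ raw inputs feeding the top gate, using $s\geq n$ to absorb them). In short: same Littlewood/Chow-style injectivity trick at the core, but a different choice of where to set up the double counting, with your version delivering a slightly more general uniqueness lemma.
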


For completeness, we give a short self-contained proof. Our proof builds on Knuth's elegant proof of Chow's theorem~(\cite{KnuthVol4A}, Theorem T).

\begin{proof} Let $g$ be an LTF taking $s$ inputs, and let $f_1,\ldots,f_s$ be LTFs on $n$ variables. Let $S(g)$ be the set of $y \in \{0,1\}^s$ such that $g(y) = 1$ and $(f_1(x),...,f_s(x)) = y$ for some $x \in \{0,1\}^n$. Let $\Sigma(g) \in \N^s$ be the sum of all $y_i \in S(g)$ as vectors over the integers. As each entry of $\Sigma(g)$ is an integer in $[0,2^n]$, note there are $(2^n+1)^s$ possible values for $\Sigma(g)$.

We claim that, if $|S(g)| = |S(h)|$ and $\Sigma(g) = \Sigma(h)$ for two LTFs $g$ and $h$ over the same functions $f_1,\ldots,f_s$, then $g(f_1,\ldots,f_s) = h(f_1,\ldots,f_s)$ as Boolean functions.

It follows from the claim that:

\begin{itemize}
\item[(a)] Every depth-two LTF circuit of $s+1$ gates is uniquely determined by $|S(g)|\leq 2^n$, $\Sigma(g)$, and the $s$ LTF gates on the bottom layer. Hence there are at most $(2^n + 1)^{s+1}$ threshold functions over $s$ given input functions.

\item[(b)] There are at most $2^{O(n^2 \cdot s)}$ distinct depth-two LTF circuits, since there are at most $(2^n+1)^{s+1}$ possible choices for the output gate, and $2^{O(n^2 s)}$ choices for the $s$ LTFs on the bottom layer (by Theorem~\ref{number-thr}).
\end{itemize}

So let's prove the claim. Suppose $g(f_1,\ldots,f_s) \neq h(f_1,\ldots,f_s)$ as Boolean functions, but $|S(g)|=|S(h)|$ and $\Sigma(g)=\Sigma(h)$.
Let $\{y_1,\ldots,y_k\} \subseteq \{0,1\}^s$ be the set of all points in the image of $(f_1,\ldots,f_s) : \{0,1\}^n \rightarrow \{0,1\}^s$, such that $g(y_i) = 1$ and $h(y_i) = 0$. By definition, $\{y_1,\ldots,y_k\} = (S(g) \setminus S(h))$. Because $|S(g)| = |S(h)|$, there must also be exactly $k$ points $z_1,\ldots,z_k$ in the image of $(f_1,\ldots,f_s)$ such that $g(z_i) = 0$ and $H(z_i)=1$; we therefore have $\{z_1,\ldots,z_k\} = (S(h) \setminus S(g))$.

Since $\Sigma(g) = \Sigma(h)$, the total sum of all vectors in $S(g)$ and $S(h)$ are the same, so we must have $\sum_{i=1}^k y_i = \sum_{i=1}^k z_i$, where the sum is componentwise.

Suppose the linear function of $g(y)$ has the form $\sum_i w_i y_i$, and threshold value $t$. Let $w = (w_1,\ldots,w_s)$.
By our definition of $y_i$ and $z_i$, we have
$\langle w,y_i \rangle \geq t$ and
$\langle w,z_i \rangle< t$ for all $i$. Therefore
\[\left\langle w,(\sum_{i=1}^k y_i)\right\rangle = \sum_i \langle w,y_i \rangle \geq k t > \sum_i \langle w,z_i \rangle = \left\langle w,(\sum_{i=1}^k z_i)\right\rangle.\] This is a contradiction, since $\sum_{i=1}^k y_i = \sum_{i=1}^k z_i$ and $k > 0$.
\end{proof}

Combining Theorem~\ref{number-ckts} with a simple counting argument, we obtain:

\begin{corollary} \label{random-lb} For all sufficiently large $n$, a randomly chosen Boolean function on $n$ variables requires depth-$2$ linear threshold circuits of size at least $\Omega(2^{n}/n^2)$, with probability $1-o(1)$.
\end{corollary}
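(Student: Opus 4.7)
The plan is a straightforward counting/probabilistic argument using Theorem~\ref{number-ckts} as the main ingredient. The total number of Boolean functions $f : \{0,1\}^n \to \{0,1\}$ is exactly $2^{2^n}$, so under the uniform distribution each specific function has probability $2^{-2^n}$. The goal is to show that the set of functions computable by small depth-$2$ LTF circuits is a negligible fraction of this total.

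First, I would invoke Theorem~\ref{number-ckts}, which bounds the number of distinct depth-$2$ LTF circuits on $n$ inputs with $s \geq n$ gates by $2^{c \cdot n^2 s}$ for some absolute constant $c > 0$. Next, I would set the threshold $s = s(n) := \alpha \cdot 2^n / n^2$ for a small constant $\alpha > 0$ to be chosen momentarily. By a union bound, the number of Boolean functions on $n$ variables that admit a depth-$2$ LTF circuit of size at most $s(n)$ is at most
\[
  \sum_{s' \le s(n)} 2^{c n^2 s'} \;\le\; s(n) \cdot 2^{c n^2 s(n)} \;\le\; 2^{2 c n^2 s(n)}
\]
for large $n$ (absorbing the $s(n)$ factor into the exponent). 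Plugging in $s(n) = \alpha \cdot 2^n/n^2$, this is $2^{2 c \alpha \cdot 2^n}$.

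Finally, choosing $\alpha < 1/(2c)$ (say $\alpha = 1/(4c)$) makes the exponent $2c\alpha \cdot 2^n \le 2^n/2$, so the count of ``easy'' functions is at most $2^{2^n/2}$. Therefore the probability that a uniformly random $f : \{0,1\}^n \to \{0,1\}$ has a depth-$2$ LTF circuit of size at most $s(n)$ is at most $2^{2^n/2}/2^{2^n} = 2^{-2^n/2} = o(1)$. Hence with probability $1 - o(1)$ a random $f$ requires depth-$2$ LTF circuits of size at least $\Omega(2^n/n^2)$, proving the corollary.

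There is no real obstacle here: the only mild care is to absorb the sum over $s' \le s(n)$ and the polynomial slack into the exponent, and to choose $\alpha$ small enough relative to the constant $c$ from Theorem~\ref{number-ckts} so that the comparison $2^{O(n^2 s)} \ll 2^{2^n}$ goes through with an exponentially decaying failure probability.
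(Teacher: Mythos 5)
Your proof is correct and matches the paper's argument: both invoke Theorem~\ref{number-ckts} to bound the number of functions computed by size-$s$ depth-$2$ LTF circuits by $2^{O(n^2 s)}$, compare against $2^{2^n}$, and conclude the probability of a random function being computable at size $o(2^n/n^2)$ is $o(1)$. Your version just spells out the union bound over $s' \le s(n)$ and the choice of the constant $\alpha$ a bit more explicitly, but it is the same counting argument.
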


\begin{proof} If we choose a function $f : \{0,1\}^n \rightarrow \{0,1\}$ uniformly at random, the probability it has depth-$2$ threshold circuits of size $s$ is at most $2^{O(n^2 s)}/2^{2^n}$, by Theorem~\ref{number-ckts}. For $s \leq o(2^{n}/n^2)$, this probability is $2^{o(2^n)}/2^{n} = o(1)$.\end{proof}

By standard arguments we also have an ``inapproximability'' refinement of the above corollary:
\begin{corollary}\label{hardFunctionCor}
For all $\epsilon\gg \sqrt{n/2^n}$, and all but an $\epsilon$-fraction of $n$-bit Boolean functions $f$, there is no depth-$2$ linear threshold circuit of size $s\leq o(\eps^2 \cdot 2^{n}/n^2)$ that agrees with $f$ on more than a $(1/2+\epsilon)$-fraction of inputs.
\end{corollary}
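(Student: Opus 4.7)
The plan is to extend the counting argument of Corollary~\ref{random-lb} with a Chernoff/Hoeffding concentration bound, then take a union bound over all small depth-two LTF circuits. First, I fix an arbitrary depth-two LTF circuit $C$ of size at most $s$, and consider the random variable $N_C = |\{x \in \{0,1\}^n : C(x) = f(x)\}|$ when $f:\{0,1\}^n\to\{0,1\}$ is drawn uniformly at random. Since the $2^n$ values of $f$ are independent fair coins, $N_C$ is a sum of $2^n$ i.i.d.\ Bernoulli$(1/2)$ random variables, so by Hoeffding's inequality,
\[
\Pr_{f}\!\left[N_C \geq (1/2+\epsilon)\cdot 2^n\right] \;\leq\; \exp\!\left(-2\epsilon^2\cdot 2^n\right).
\]

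Next, I invoke Theorem~\ref{number-ckts}: the number of distinct depth-two LTF circuits on $n$ inputs with at most $s\geq n$ gates is at most $2^{O(n^2 s)}$. (For $s<n$, the same bound holds trivially, since such a circuit depends on at most $s\cdot n \leq n^2$ wires and hence is subsumed by the $s=n$ count.) A union bound over this family of circuits gives
\[
\Pr_{f}\!\left[\exists\, C \text{ of size }\leq s \text{ with } N_C \geq (1/2+\epsilon)2^n\right] \;\leq\; 2^{O(n^2 s)}\cdot \exp\!\left(-2\epsilon^2\cdot 2^n\right).
\]
I want the right-hand side to be at most $\epsilon$, i.e.\ $O(n^2 s) - \Omega(\epsilon^2\cdot 2^n) \leq \log\epsilon$. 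Taking $s = o(\epsilon^2\cdot 2^n/n^2)$ makes the first term $o(\epsilon^2\cdot 2^n)$, so the exponent is $-\Omega(\epsilon^2\cdot 2^n)$.

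Finally, I need to verify that $\epsilon^2\cdot 2^n$ dominates $\log(1/\epsilon)$, so that the bound is actually at most $\epsilon$. The hypothesis $\epsilon \gg \sqrt{n/2^n}$ yields $\epsilon^2\cdot 2^n \gg n$, while $\log(1/\epsilon) \leq \tfrac{1}{2}\log(2^n/n) = O(n)$, so $\Omega(\epsilon^2\cdot 2^n) \gg \log(1/\epsilon)$, and the union-bound probability is indeed $o(\epsilon) < \epsilon$ for $n$ large. Hence all but an $\epsilon$-fraction of Boolean functions have no size-$s$ depth-two LTF circuit agreeing on a $(1/2+\epsilon)$-fraction of inputs. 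The only subtle step is the last one—choosing the size threshold $s$ so that the Chernoff savings beat both the circuit count $2^{O(n^2 s)}$ and the target failure probability $\epsilon$—but this is exactly what the hypothesis $\epsilon\gg\sqrt{n/2^n}$ is engineered to guarantee, so no real obstacle arises.
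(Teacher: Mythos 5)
Your proof is correct and follows essentially the same route as the paper: count the distinct size-$s$ depth-two LTF circuits via Theorem~\ref{number-ckts}, apply a Chernoff/Hoeffding bound to each fixed circuit against a random $f$, and finish with a union bound, using $\epsilon \gg \sqrt{n/2^n}$ to ensure the exponent $\Omega(\epsilon^2 2^n)$ overwhelms both the $2^{O(n^2 s)}$ circuit count and the target failure probability $\epsilon$. The paper's proof is more terse but uses the identical ingredients; your write-up just makes explicit the step of checking $\epsilon^2 2^n \gg \log(1/\epsilon)$, which is a harmless (and welcome) addition.
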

\begin{proof}
By Theorem~\ref{number-ckts}, the total number of functions computed by such size-$s$ depth-$2$ circuits is at most
$
2^{o(\epsilon^2 2^n)}.
$
For any such circuit, it agrees with a randomly chosen $f$ on a $(1/2+\epsilon)$-fraction of inputs with probability $2^{-\Omega(\epsilon^2 2^n)}$, by standard Chernoff bounds. Taking a union bound over our choice of circuits, we conclude that $f$ does not agree with any size-$s$ depth-$2$ threshold circuit on a $(1/2+\epsilon)$-fraction of inputs, with probability at least $1-\epsilon$.
\end{proof}

\label{history}
\paragraph{A Short History of Low-Depth Threshold Lower Bounds.} Hajnal \emph{et al.}~\cite{Hajnal93} proved the first size lower bounds for $\THR \circ \THR$ circuits, showing that the \emph{inner product modulo 2} (a.k.a. IP2) requires $2^{\Omega(n)}$ size when the weights of each LTF are small (polynomial in the input length). This result is often cited as saying that the inner product does not have subexponential-size $\MAJ \circ \MAJ$ circuits, as MAJORITY functions can simulate LTFs with polynomial weights. Note that IP2 has $\MAJ \circ \MAJ \circ \MAJ$ circuits with $O(n)$ gates, so we cannot use IP2 in our depth-three lower bounds. Nisan~\cite{Nisan94} elegantly applied communication complexity ideas to extend the exponential lower bound to $\MAJ \circ \THR$ circuits; that is, the lower bound holds even if the weights of the LTFs on the hidden layer are arbitrary. Later, Forster \emph{et al.}~\cite{Forster01} extended the lower bound to $\THR \circ \MAJ$ circuits, where only the middle (hidden) layer is restricted to have small weights.

In terms of lower bounds for general $\THR \circ \THR$ circuits, only a few results are known. Goldmann, H{\aa}stad, and Razborov~\cite{Goldmann-Hastad-Razborov92} showed that every $\THR \circ \THR$ circuit can be efficiently simulated by a $\MAJ \circ \MAJ \circ \MAJ$ circuit, and computing IP2 with $\THR \circ \THR$ requires $\Omega(n/\log n)$ gates. Gr{oe}ger and G. Tur\'{a}n~\cite{GrogerT91,GrogerT93} and Roychowdhury, Orlitsky, and Siu~\cite{RoychowdhuryOS94} proved that IP2 has gate complexity $\Theta(n)$ for LTF circuits; their result has no depth restriction.
Paturi and Saks~\cite{PaturiS94} showed that PARITY requires $\Omega(n/\log^2 n)$ gates for $\MAJ \circ \MAJ$ circuits. Impagliazzo, Paturi, Saks~\cite{ImpagliazzoPS97} showed that $\THR\circ \THR$ circuits computing PARITY cannot have $o(n^{3/2})$ wires, nor can they have $o(n^{1/2})$ gates. (Their proof in fact gives a lower bound for all constant depths, although the bounds get smaller as the depth increases.) In the uniform setting, Allender and Koucky~\cite{Allender-Koucky10} have shown that for every $d$, there is an $\eps \in (0,1)$ such that the SAT problem cannot be solved by LOGTIME-uniform depth-$d$ LTF circuits with $O(n^{1+\eps})$ wires. Other more recent work on $\THR \circ \THR$ includes~\cite{AmanoM05,HP10,Hansen-Podolskii13,IPS13,WilliamsTHR14,ChenS15}.

In all the above cases, no super-linear gate lower bounds were known, and no quadratic wire lower bounds were known, even for $\THR \circ \THR$. PARITY is well-known to have $\MAJ \circ \MAJ$ circuits of $O(n)$ gates, and we show in Theorem~\ref{parity-approx} that there \emph{are} always $\MAJ \circ \MAJ$ circuits of $O(n^{1/2})$ gates and $O(n^{3/2})$ wires which agree with PARITY on  $99\%$ of the inputs, so it is impossible to extend the lower bounds of \cite{PaturiS94,ImpagliazzoPS97} for PARITY in the way we seek (for good reason).

\section{Random Restrictions on Linear Threshold Functions}

We are ready to give our main lemma on random restrictions to linear threshold functions. To properly state it, we need to set up some notation. Define a \emph{restriction} to be a function $\rho : [n] \rightarrow \{0,1,\star\}$. (Such a function is also called a partial assignment.) If $\cal P$ is a set partition of $[n]$, we say that $\rho$ is a \emph{random restriction across $\cal P$} if $\rho$ is obtained by first uniformly randomly choosing a one element $e_i$ of each part of $\cal P$, then setting $\rho(e_i)=\star$ for each $i$, and setting $\rho(j)$ randomly and independently to either $0$ or $1$ for all other $j \in [n]$.

A \emph{completion of $\rho$} is simply a function $\tau : [n] \rightarrow \{0,1\}$ such that for all $i$ such that $\rho(i) \neq \star$ we have $\tau(i) = \rho(i)$. That is, $\tau$ extends the partial assignment $\rho$ to some full assignment on all variables. We say that an LTF $f : \{0,1\}^n \rightarrow \{0,1\}$ is \emph{forced to a constant by restriction $\rho$} if there is a $c \in \{0,1\}$ such that for all completions $\tau$ of $\rho$, the output of $f$ on $\tau$ always equals $c$. That is, $f$ is ``forced to a constant'' if $\rho$ has set enough variables of $f$ that the remaining function is constant. We record the following trivial (but crucial) observation that we can \emph{simplify} circuits when their gates are forced to constants:

\begin{proposition}\label{forced} Let $C$ be an $n$-input circuit over LTF gates, let $\ell$ be an LTF in $C$, and let $\rho : [n] \rightarrow \{0,1,\star\}$ be a restriction that forced $\ell$ to a constant $c$. Then the subfunction defined by $C(\rho(1),\ldots,\rho(n))$ has an equivalent circuit with the gate $\ell$ removed, and the constant $c$ placed on the output wires of $\ell$.
\end{proposition}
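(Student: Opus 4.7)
The plan is to verify the proposition by direct substitution and a straightforward structural induction on the circuit. The point is simply that, under $\rho$, the gate $\ell$ behaves identically to the constant wire $c$ on every completion, so every other gate that reads $\ell$'s output will see exactly the same bit in the modified circuit.

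First, I would spell out the modified circuit explicitly. Call the original circuit $C$, and construct $C'$ from $C$ as follows: delete the gate $\ell$, then for every wire of $C$ whose source was $\ell$ (whether feeding another gate of $C$, or being the output wire of $C$ itself if $\ell$ is the output gate), replace that wire by the constant $c \in \{0,1\}$. The restricted input layer is obtained by fixing every variable $i$ with $\rho(i) \neq \star$ to the value $\rho(i)$ and leaving the coordinates with $\rho(i) = \star$ as free input variables. The resulting object $C'$ is still a circuit over LTF gates (possibly with a few bottom-level constants wired in, which can be absorbed into the threshold of any gate reading them).

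Next I would verify equivalence pointwise. Fix an arbitrary completion $\tau$ of $\rho$; this corresponds to any assignment to the $\star$-coordinates. By the hypothesis that $\rho$ forces $\ell$ to $c$, the value computed at gate $\ell$ when $C$ is evaluated on $\tau$ equals $c$. Now evaluate $C$ and $C'$ on $\tau$ in parallel, processing the gates in any topological order. Every gate $g$ not equal to $\ell$ (and not strictly below $\ell$ in the DAG) receives, in $C$, inputs that are either direct bits of $\tau$ or outputs of previously evaluated gates; in $C'$ the same gate $g$ receives either the same bits of $\tau$, the outputs of the same previously evaluated gates, or a constant $c$ on wires that used to come from $\ell$. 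Since $\ell$ evaluates to $c$ on $\tau$, these two views of $g$'s input vector coincide bit for bit, and so $g$ produces the same output in $C$ and $C'$. An easy induction along the topological order shows that the output gate agrees, so $C(\tau) = C'(\tau)$.

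Finally, because $\tau$ was an arbitrary completion of $\rho$, $C$ and $C'$ define the same subfunction on the free coordinates, which is the conclusion. There is essentially no obstacle here; the only bookkeeping issue is handling the case where $\ell$ has fan-out greater than one, or is itself the output gate, but both are covered uniformly by the rule ``replace every outgoing wire of $\ell$ by the literal $c$.''
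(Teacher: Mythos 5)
Your proof is correct. The paper gives no proof of this proposition at all---it is explicitly flagged as a ``trivial (but crucial) observation'' and left unproved---and the argument you supply (construct $C'$ by cutting out $\ell$ and wiring in the literal $c$, then check pointwise equality on every completion $\tau$ of $\rho$ by a topological induction, using that the forcing hypothesis makes gate $\ell$ output $c$ on every such $\tau$) is exactly the intended justification; the remark about fan-out $>1$ and the case $\ell$ being the output gate is appropriate and handled correctly.
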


We recall the main lemma to prove:

\begin{reminder}{Lemma~\ref{rr-LTF}} Let $f:\{0,1\}^n \rightarrow \{0,1\}$ be a linear threshold function. Let $\cal P$ be a partition of $[n]$ into parts of equal size, and let ${\cal R}_{\cal P}$ be the distribution on restrictions $\rho : [n] \rightarrow \{0,1,\star\}$ that randomly fixes all but one element of each part of $\cal P$. Then \[\Pr_{\rho \sim {\cal R}_{\cal P}}[f \text{ is not forced to a constant by $\rho$}] = O(|{\cal P}|/\sqrt{n}).\]
\end{reminder}

Observe that Lemma~\ref{rr-LTF} is essentially tight for the MAJORITY function (the LTF with $L(x) = \sum_{i=1}^n x_i$ and $t = \lceil n/2 \rceil$). In particular, if we set all but $k$ inputs of the $n$-bit MAJORITY function to uniform random values, the MAJORITY function is not forced to a constant only when the difference between the number of $1$'s and the number of $0$'s is within $[-k,k]$. This occurs with probability $\sim$ $k/\sqrt{n}$ for small $k$. (This is just another way of saying that the Littlewood-Offord Lemma is tight for the vector $a = (1,\ldots,1)$.)

\begin{proofof}{Lemma~\ref{rr-LTF}}
Let the linear threshold function $f$ be defined by the linear function $L(x) = \sum_{i=1}^n a_i x_i$ with threshold value $t \in \R$. For a set $B\subseteq [n]$,  let ${\cal R}_B$ be the distribution of restrictions of $f$ to all but $B$, where we set elements in $B$ to $\star$ and elements not in $B$ to random $0,1$ values. Assume $x \in \{0,1\}^{n-|B|}$ is chosen uniformly at random, and let $L'(x) = \sum_{i\not\in B} a_i x_i$. Observe that $f$ is forced to a constant if and only if
$$
\left(L'(x) < t - \sum_{i\in B : a_i > 0}|a_i|\right) \text{ or } \left(L'(x) > t+\sum_{i\in B : a_i < 0}|a_i|\right).
$$
(In the first case, $f$ is forced to $0$; in the second case, $f$ is forced to $1$.) Therefore, the probability that a random restriction $\rho \sim {\cal R}_B$ does not force $f$ to a constant is at most the probability that $L'(x)$ lies in the interval $$I = \left[t - \sum_{i\in B : a_i > 0}|a_i|, ~t+\sum_{i\in B : a_i < 0}|a_i|\right].$$ Note that $|I| = \sum_{i\in B}|a_i|$, and that we can write $I$ an a union of intervals $I_i$ with $|I_i|=|a_i|$. Therefore, by a union bound we have that
\begin{align} \label{union-bd}
\Pr_{\rho \sim  {\cal R}_B}\left[f\textrm{ is not forced to a constant by }\rho\right] \leq \sum_{i\in B} \Pr_{x}\left[L'(x)\in I_i\right].
\end{align}
For all $i=1,\ldots,n$, define $k_i$ to be the number of $j \in [n]$ such that $|a_i| \geq |a_j|$. Observe that
\begin{align}\label{k-and-a}
k_i \geq k_j ~\iff~|a_i| \leq |a_j|.
\end{align}
 We claim that
\begin{equation}\label{RestrictToBEqn}
\Pr_{\rho}\left[f\textrm{ is not forced to a constant by } \rho \right] \leq \sum_{i\in B} \frac{O(1)}{\sqrt{k_i}}.
\end{equation}
Let us prove this claim. For all $i=1,\ldots,n$, define $k'_i$ to be the number of $j \in [n]-B$ such that $|a_i|\geq |a_j|$. By Lemma \ref{LOLem}, we have for all $i=1,\ldots,|B|$ that
$$
\Pr_{x \in \{0,1\}^{n-|B|}}\left[L'(x)\in I_i\right] \leq \frac{O(1)}{\sqrt{k_i'}}.
$$
Obviously, $k'_i \leq k_i$ for all $i \in B$. To prove \eqref{RestrictToBEqn}, we need an inequality in the opposite direction. We consider two cases. First, suppose there is an $i \in B$ with at least $k_i/2$ different $j\in B$ satisfying $k_j \leq k_i$. Then $\sum_{i\in B} 1/\sqrt{k_i} \geq (k_i/2)/\sqrt{k_i} \gg 1$, and inequality \eqref{RestrictToBEqn} trivially holds in this case. Otherwise, for all $i \in B$, there are at most $k_i/2$ different $j \in B$ with $k_j \leq k_i$. By \eqref{k-and-a}, this means that $|a_i| \leq |a_j|$ for at most $k_i/2$ different $j$'s, implying that $k_i \leq k'_i + k_i/2$ for all $i \in B$. So we have $k_i \leq 2k_i'$ for all $i\in B$, and
$$
\Pr_{\rho \sim {\cal R}_B}\left[f\textrm{ is not forced to a constant}\right] \leq \sum_{i\in B} \Pr_{x}\left[L'(x)\in I_i\right] \leq \sum_{i\in B} \frac{O(1)}{\sqrt{k_i'}} \leq \sum_{i\in B} \frac{O(1)}{\sqrt{k_i}},
$$
and the inequality \eqref{RestrictToBEqn} holds in this case as well. Therefore
\begin{align*}
\Pr_{\rho \sim {\cal R}_{\cal P}}\left[f \textrm{ not forced to constant by $\rho$}\right]
& = \E_B\left[ \Pr_{\rho \sim {\cal R}_B}\left[f \textrm{ not forced to constant by $\rho$}\right]\right] \\
& \leq \E_B \left[ \sum_{i\in B} \frac{O(1)}{\sqrt{k_i}}\right] & \text{ (by \eqref{RestrictToBEqn})}\\
& = \sum_{i=1}^n \frac{O(1)}{\sqrt{k_i}}\cdot \Pr[i\in B] \\
& = \frac{|{\cal P}|}{n}\cdot \left(\sum_{i=1}^n \frac{O(1)}{\sqrt{k_i}}\right)\\
& \leq \frac{|{\cal P}|}{n}\cdot \left(\sum_{\ell=1}^n \frac{O(1)}{\sqrt{\ell}}\right) & \text{(by def. of $k_i$)}\\
& = O(|{\cal P}|/\sqrt{n}), & \text{(by $\sum_{k=1}^n k^{-1/2} \leq 2 n^{1/2}$)}
\end{align*} where the last inequality follows by upper bounding the sum with an integral. This completes the proof.

\end{proofof}

In order to prove our wire lower bounds, we require a version of Lemma \ref{rr-LTF} that yields better results for thresholds that depend on relatively few inputs. Unfortunately, a threshold gate with a single input has only one wire, and yet has a $|{\cal P}|/n$ chance of not being forced to be a constant; this is too weak of a bound for us. On the other hand, we know that a gate with only one input always returns just its input or its negation. In particular, even if a gate is not set to a constant by a restriction, we can still replace it by a single wire if its output depends on only \emph{one} of the remaining inputs. In terms of this event, we can prove a much better probability bound.

For a random restriction $\rho$ and LTF $f$, we say that $f(x_1,\ldots,x_n)$ is \emph{forced to a function of a single input by $\rho$} if there is an $i \in [n]$ such that for all completions $\tau$ of $\rho$, the output of $f$ on $\tau$ depends only on the variable $x_i$ or is a fixed constant.

\begin{lemma}\label{wire-rr-Lem}
Let $f:\{0,1\}^n \rightarrow \{0,1\}$ be a linear threshold function that depends on only $w$ of its inputs. Let $\cal P$ be a partition of $[n]$ into parts of equal size, and let ${\cal R}_{\cal P}$ be the distribution on restrictions $\rho : [n] \rightarrow \{0,1,\star\}$ that randomly fixes all but one element of each part of $\cal P$. Then
\[\Pr_{\rho \sim {\cal R}_{\cal P}}[f \text{ is not forced to a function of a single input by $\rho$}] = O(w|{\cal P}|^{3/2}/n^{3/2}).\]
\end{lemma}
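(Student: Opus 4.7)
My plan is to use the following characterization: $f|_\rho$ depends on at most one of its unrestricted variables iff there exists a ``pivot'' index $i^* \in B$ (where $B$ denotes the set of unrestricted indices) such that both sub-restrictions $f|_{\rho,\, x_{i^*}=0}$ and $f|_{\rho,\, x_{i^*}=1}$ are forced to constants on the remaining variables $B \setminus \{i^*\}$. Taking the contrapositive, if $f|_\rho$ is not forced to a single-input function, then for \emph{any} specific $i^* = i^*(B)$ chosen as a function of $B$ alone, at least one of $f|_{\rho,\, x_{i^*}=c}$ for $c \in \{0,1\}$ must fail to be constant. I will choose $i^*(B)$ adversarially to be the element of $B \cap S$ of largest weight, where $S$ is the support of $f$ (of size $w$). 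If $|B \cap S| \leq 1$ then $f|_\rho$ is already forced to a function of at most one variable and there is nothing to prove, so I may assume $|B \cap S| \geq 2$, making $i^*$ well-defined and distinct from some other element of $B \cap S$.

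With this choice, the restricted sub-LTF $f|_{\rho,\, x_{i^*}=c}$ has only $|B \cap S| - 1$ nonzero weights. Following the proof of Lemma~\ref{rr-LTF}, this sub-LTF is not forced to a constant iff the random quantity $\alpha := \sum_{j \in S \setminus B} a_j \rho(j)$ lies in an interval of length $\sum_{k \in (B \cap S) \setminus \{i^*\}} |a_k|$. I would decompose that interval into subintervals of length $|a_k|$ and apply the Littlewood--Offord bound (Lemma~\ref{LOLem}) to each, using the same two-case trick as in the proof of Lemma~\ref{rr-LTF} to pass from the local count $|\{j \in S \setminus B : |a_j| \geq |a_k|\}|$ to the global count $K_k := |\{j \in S : |a_j| \geq |a_k|\}|$ at the cost of only a constant factor. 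This should yield, conditional on $B$, a bound of the form
\[\Pr_\rho[\,f|_\rho \text{ not forced to single-input} \mid B\,] \;\leq\; O(1) \sum_{k \in (B \cap S) \setminus \{i^*(B)\}} \frac{1}{\sqrt{K_k}}.\]

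Taking expectation over $B$ would then give
\[\Pr_\rho[\,f|_\rho \text{ not forced to single-input}\,] \;\leq\; O(1) \sum_{k \in S} \frac{\Pr[k \in B \,\wedge\, k \neq i^*(B)]}{\sqrt{K_k}}.\]
The key step will be bounding the joint event $\{k \in B \,\wedge\, k \neq i^*(B)\}$: it requires both that $k$ be chosen as the unique unrestricted element of its part (contributing $|\mathcal{P}|/n$) and that at least one strictly larger-weight support variable also land in $B$ (contributing at most $(K_k-1)|\mathcal{P}|/n$), giving $\Pr[\cdots] \leq \min(|\mathcal{P}|/n,\, (K_k-1)|\mathcal{P}|^2/n^2)$. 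A case split on whether the rank $K_k$ lies below or above $p := n/|\mathcal{P}|$, combined with $\sum_{K=1}^w K^{-1/2} = O(\sqrt{w})$, should show the total is bounded by $O(w|\mathcal{P}|^{3/2}/n^{3/2})$ as claimed.

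The main obstacle I anticipate is executing the two-case trick cleanly in this modified setting, where the Littlewood--Offord sum is indexed by $(B \cap S) \setminus \{i^*\}$ rather than by $B$ and the ``universe'' for the global rank statistics is the support $S$ rather than all of $[n]$. The final case split on $w$ versus $p$ is routine, but requires careful bookkeeping to verify that both regimes ($w \leq p$ and $w > p$) yield the desired bound without losing any factor of $|\mathcal{P}|$ or $w$.
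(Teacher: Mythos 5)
Your proposal is correct but takes a genuinely different, somewhat heavier route than the paper's. The paper does not introduce a pivot: it observes that if $f|_\rho$ is not forced to a function of a single input, then $f|_\rho$ is \emph{in particular} not forced to a constant and also $|B\cap S|\geq 2$, and it simply multiplies the conditional bound of Equation~\eqref{RestrictToBEqn} (established already in the proof of Lemma~\ref{rr-LTF}) by the indicator $\llbracket |B\cap S|\geq 2\rrbracket$; all of the gain over Lemma~\ref{rr-LTF} then comes from the estimate $\Pr[i\in B,\ |B\cap S|\geq 2] \leq \min\{|{\cal P}|/n,\ w|{\cal P}|^2/n^2\}$ and the bound $\min\{a,b\}\leq\sqrt{ab}$. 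Your pivot characterization is sound, and taking $i^*$ to be the max-weight element of $B\cap S$ is a clean device that guarantees, whenever $k\in(B\cap S)\setminus\{i^*\}$ contributes to the sum, there is automatically another heavier support variable in $B$. The cost is that you must re-derive the Littlewood--Offord estimate (including the two-case trick) for the sub-restricted LTF $f|_{\rho,\,x_{i^*}=c}$ over the universe $S$, rather than just citing Equation~\eqref{RestrictToBEqn}; this does go through, since in the nontrivial case removing $i^*$ changes the relevant counts by at most one, but it is extra work the paper avoids. Your intermediate estimate $\Pr[k\in B,\ k\neq i^*(B)]\leq\min\{|{\cal P}|/n,\ (K_k-1)|{\cal P}|^2/n^2\}$ is actually slightly sharper than the paper's uniform $w|{\cal P}|^2/n^2$, though the improvement washes out; and the closing case split on $K_k$ versus $n/|{\cal P}|$ is valid but unnecessary, since $\min\{a,b\}\leq\sqrt{ab}$ gives each term at most $\sqrt{(K_k-1)/K_k}\cdot|{\cal P}|^{3/2}/n^{3/2}\leq|{\cal P}|^{3/2}/n^{3/2}$ and summing over the $w$ elements of $S$ finishes. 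In sum, both proofs work: the paper's is shorter because it charges for non-constancy once over all of $B\cap S$ via Equation~\eqref{RestrictToBEqn}, whereas your pivot route is more self-contained but pays exactly the bookkeeping you flagged.
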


\begin{proof}
Let $S$ be the set of inputs on which $f$ depends.
Define $k_i$ for $i\in S$ as in the proof of Lemma \ref{rr-LTF}. Suppose  we have a random restriction $\rho$ that fixes all input variables except those in $B\subset [n]$.

We want to upper bound the probability that $\rho$ does not fix $f$ to a function of a single input. In order for this to happen it must be that $|B\cap S|\geq 2$, for otherwise the restriction of $f$ depends only on the coordinates in $S\cap B$. Furthermore, this event happens with probability at most $\sum_{i\in B\cap S} O(1/\sqrt{k_i})$, by Equation \eqref{RestrictToBEqn}. Therefore, the probability that $f$ is not fixed to a function of a single input is
\begin{align*}
\E_B\left[\llbracket|B\cap S|\geq 2\rrbracket \cdot \sum_{i\in B\cap S} \frac{O(1)}{\sqrt{k_i}} \right] & = \sum_{i\in S} \frac{O(1)}{\sqrt{k_i}}\cdot \Pr[i\in B, |B\cap S|\geq 2]\\
& = \sum_{i\in S}\frac{O(1)}{\sqrt{k_i}}\cdot \min\left\{\frac{|{\cal P}|}{n},\sum_{j\neq i, j\in S} \Pr[i,j\in B] \right\}\\
& = \sum_{i\in S}\frac{O(1)}{\sqrt{k_i}} \cdot \min\left\{\frac{|{\cal P}|}{n},w\cdot\frac{|{\cal P}|^2}{n^2} \right\} \\
& \leq \left(w^{1/2}\cdot \frac{|{\cal P}|^{3/2}}{n^{3/2}}\right) \sum_{i\in S} \frac{O(1)}{\sqrt{k_i}} & \text{($\min\{a,b\} \leq \sqrt{a b}$,~ $a,b>0$)}\\
& = O(w\cdot |{\cal P}|^{3/2}/n^{3/2}).
 & \text{(by $\sum_{i\in S}\frac{1}{\sqrt{k_i}} \leq 2\sqrt{w}$)}
\end{align*}
\end{proof}

\section{Tight Upper and Lower Bounds for Approximately Computing PARITY}

We begin with the upper and lower bounds for computing PARITY with $\THR \circ \THR$ circuits, since the lower bounds here are the easiest example of what our Random Restriction Lemmas can do.

\begin{reminder}{Theorem~\ref{parity-approx}} The gate complexity of  $\MAJ \circ \MAJ$ circuits that agree with PARITY on $99\%$ of all $n$-bit inputs is $\Theta(\sqrt{n})$. The wire complexity is $\Theta(n^{3/2})$. The lower bounds hold even for $\THR \circ \THR$ circuits.
\end{reminder}

\begin{proof} (Upper bound) We produce an explicit construction. Let $L_k=\llbracket \sum_{i=1}^n x_i \geq k\rrbracket$ be the output of a ``threshold-at-least-$k$'' gate over its $n$ inputs. Observe that $L_k-L_{k+1}$ is the 0/1 indicator function of the exact threshold function $\sum_{i=1}^n x_i = k$.

Let $c > 0$ be sufficiently large in the following. We construct the top gate of our depth-two circuit to compute the threshold function defined by \[\sum_{k\textrm{ even},~ k = n/2-c\sqrt{n}}^{n/2+c\sqrt{n}} L_k-L_{k+1} \geq 1.\] This function agrees with parity as long as $n/2-c\sqrt{n} \leq \sum_{i=1}^n x_i \leq n/2+c\sqrt{n}$, and for a sufficiently large $c > 0$, this happens on $99\%$ of all inputs. This circuit obviously uses $O(\sqrt{n})$ gates and $O(n^{3/2})$ wires.

(Lower Bound) We begin with the observation (originally due to Minsky and Papert~\cite{Minsky-Papert69}) that a single $\THR$ with two inputs cannot approximate PARITY on more than $75\%$ of its inputs. Therefore, for every $\THR\circ\THR$ circuit $C$ that approximates PARITY on $99\%$ of all $n$-bit inputs, if $C$ is randomly restricted on all but {two} unassigned inputs, then the restricted circuit ${C}'$ cannot be equivalent to a single $\THR$ with more than $10\%$ probability. On the other hand, the circuit ${C}'$ will be equivalent to a single $\THR$ gate, unless at least one of the bottom level gates of ${C}'$ is \emph{not forced to a function on one input}: that is, at least one gate on the bottom layer is neither a projection of one input, nor is it a constant function. Call a gate ``trivial'' if is forced to a function on one input. Observe that a trivial gate can always be replaced by a single wire, or a constant.

So let $C$ be a depth-$2$ $\THR$ circuit with $s$ gates and $w$ wires that agrees with PARITY on at least $99\%$ of the $n$-bit inputs. Let ${\cal P}$ be a partition of $[n]$ into two equally sized subsets. Suppose we choose a random restriction $\rho$ from ${\cal R}_{\cal P}$ on the inputs of $C$, resulting in a restricted circuit $C'$ on two inputs. Our Random Restriction Lemmas \ref{rr-LTF} and \ref{wire-rr-Lem} show that the number of non-trivial bottom level gates of $C$ is, in expectation,\[O\left(\min\left(\frac{s}{\sqrt{n}},\frac{w}{n^{3/2}}\right)\right).\] If either $s\ll \sqrt{n}$ or $w\ll n^{3/2}$, then by Markov's inequality, there are no non-trivial gates on the bottom layer with probability at least $50\%$. That is, on at least half of the possible random restrictions, the remaining circuit $C'$ on two inputs is equivalent to a single $\THR$ gate, and hence $C'$ does not compute PARITY correctly on more than $75\%$ of its inputs. By the previous paragraph, it follows that $C$ does not compute PARITY correctly on $99\%$ of the inputs. 
 \end{proof}

\section{Depth-Two Lower Bounds for the Andreev Function}

The strategy for our depth-two lower bounds for $A_n$ has a similar structure to the known Boolean formula lower bound proofs for $A_n$: hit the function $A_n$ (and a circuit $C$ that supposedly computes it) with a random restriction of an appropriately controlled form, so that the $2^k$-bit input $x$ of $A_n$ is assigned a uniform random value, and from each block $i$ there is one $a_{i,j}$ that is left unset. The remainder implements a function on $k$ bits, whose truth table is given by $x$. When $x$ is chosen uniformly at random, we are implementing a random $k$-bit function and can use our depth-two lower bounds for random functions (Corollaries~\ref{random-lb} and \ref{hardFunctionCor}) to argue that the number of gates left in $C$ must be somewhat large: at least $\Omega(2^k/k^2) = \Omega(n/\log^2 n)$. However, if the original $C$ began with a small enough number of gates, the Random Restriction Lemmas tell us that we should expect the random restriction to $C$ to force many bottom-layer gates to constants (and kill many wires in $C$). Setting the parameters appropriately yields a contradiction.

\begin{reminder}{Theorem~\ref{andreev-lb}} Any function $f$ that agrees with $A_n$ on at least a $(1/2+\epsilon)$-fraction of inputs for some $\epsilon \gg \sqrt{\log(n)/n}$ cannot be computed by $\THR \circ \THR$ circuits with fewer than $\Omega(\epsilon^3 n^{3/2}/\log^3(n))$ gates or fewer than $\Omega(\epsilon^3 n^{5/2}/\log^{7/2}(n))$ wires.
\end{reminder}

\begin{proof}
Recall that $A_n$ has inputs $x \in \{0,1\}^{n/2}$, and $a_{i,j} \in \{0,1\}$, with $i=1,\ldots,k$ and $j = 1,\ldots,2^k/k$, where $k := \lfloor \log_2(n)\rfloor$. Let $C$ be a $\THR\circ \THR$ circuit with $s\leq c \epsilon^3 n^{3/2}/\log^3(n)$ gates or fewer than $w\leq c\epsilon^3 n^{5/2}/\log^{7/2}(n)$ wires for a sufficiently small constant $c>0$. By Corollary~\ref{hardFunctionCor} there exists a $c'>0$ so that with probability at least $1-\epsilon/3$ a random function $f$ on $\lfloor \log_2(n/2)\rfloor$ bits does not agree on a $(1/2+\epsilon/3)$-fraction of inputs with any $\THR\circ \THR$ circuit with fewer than $c' \epsilon^2 n/\log^2(n)$ bottom level gates. We claim that if a random choice of $x$ corresponds to the truth table of such a function $f$ (and it will, with probability at least $1-\epsilon/3$), then the probability over the remaining bits that $C$ agrees with $A_n$ is at most $1/2+2\epsilon/3$. 

Consider fixing the bits of input to the coordinates corresponding to $x$ to such an $f$. Let $\cal P$ be the partition of the remaining $n/2$ bits of input into the $k$ subsets $\{a_{i,1},a_{i,2},\ldots,a_{i,2^k/k}\}$. Let $\rho$ be a random restriction from ${\cal R}_{\cal P}$. By Lemmas \ref{rr-LTF} and \ref{wire-rr-Lem}, every linear threshold function $g$ on $n/2$ bits is forced to a constant by $\rho$ with probability at least $1-O(|{\cal P}|/\sqrt{n}) = 1-O(\log(n)/\sqrt{n})$ and every gate with $u$ wires is forced to a function of a single input with probability at least $1-O(u|{\cal P}|^{3/2}/n^{3/2})=1-O(u\log^{3/2}(n)/n^{3/2})$. Therefore, in either case, the expected number of bottom level gates in $C$ not forced to constants by $\rho$ is at most most $c'\epsilon^3 n/(3\log^2(n))$ (assuming $c$ was sufficiently small).

Therefore, by Markov's inequality, with probability at least $1-\epsilon/3$ we have that when $C$ is restricted by $\rho$, all but $c' \epsilon^2 n/\log^2(n)$ of the bottom level gates of $C$ are forced to functions of single inputs. In this case, $C$ is equivalent to a $\THR\circ\THR$ circuit with at most $c' \epsilon^2 n/\log^2(n)$ gates. On the other hand, the restriction of $A_n$ to $\rho$ is merely the function $f(y_1,y_2,\ldots,y_k)$, where each $(y_i = \sum_{j} a_{i,j} \bmod 2)$ equals either one of the $k$ unassigned variables, or its negation. (Note that negations of variables cannot change the circuit size: negations can easily be accommodated in the threshold gates.) Since by assumption the function $f$ does not agree with any $\THR\circ \THR$ of size $c'\epsilon^2 n/\log^2(n)$ on more than a $(1/2+\epsilon/3)$-fraction of inputs, we find for those values of $f$ and $\rho$ that
$$
\Pr_y \left[A_n|_\rho(y) = C|_\rho (y)\right] \leq 1/2 + \epsilon/3.
$$
However, it is easy to see that
$$
\Pr_{x \in \{0,1\}^n}\left[A_n(x)={C}(x)\right] = \E_{f,\rho}\left[\Pr_y[A_n|_\rho(y) = {C}|_\rho (y)]\right] \leq 2\epsilon/3 + (1/2+\epsilon/3) \leq 1/2+\epsilon.
$$
This completes the proof.
\end{proof}

\subsection{Approx-MAJ of MAJ of MAJ Lower Bounds for Andreev}

From the above lower bounds, it is straightforward to conclude depth-three lower bounds for Andreev's function with an Approximate Majority gate at the top.

Say that a Boolean function $f$ is computed by an \emph{$\epsilon$-Approximate Majority of $\THR \circ \THR$} if there is a collection of $\THR \circ \THR$ circuits such that, for every input $x$, at least a $1/2+\eps$ fraction of circuits in the collection output the value $f(x)$. The following is an easy corollary of our depth-two threshold lower bounds for Andreev's function:

\begin{reminder}{Corollary~\ref{apx-majority}} Every \emph{$\epsilon$-Approximate Majority of $\THR \circ \THR$} for $A_n$ needs at least $\Omega(\epsilon^3 n^{3/2}/\log^3(n))$ gates and $\Omega(\eps^{3} n^{5/2}/\log^{7/2} n)$ wires.
\end{reminder}

\begin{proof} Let ${\cal C} = \{C_1,\ldots,C_t\}$ be a collection of $\THR\circ \THR$ circuits such that on every input $x$, at least $(1/2+\eps)t$ of the circuits in ${\cal C}$ agree with Andreev's function on $x$. Assuming the entire collection ${\cal C}$ has $o(\epsilon^3 n^{3/2}/\log^3(n))$ total gates, it follows that every $C_i$ has $o(\epsilon^3 n^{3/2}/\log^3(n))$ gates (we cannot expect to do much better here, since the $C_i$'s could share almost all of their gates and wires). Let $x_1,\ldots,x_{2^n}$ be a list of all $n$-bit strings, and form a $t \times 2^n$ Boolean matrix $M$ where $M(i,j) = 1$ if and only if $C_i(x_j)$ equals Andreev's function on $x$. By Theorem~\ref{andreev-lb}, each row of $M$ contains less than a $(1/2+\eps)$-fraction of ones, so the entire matrix has less than a $(1/2+\eps)$-fraction of ones. However, every column of $M$ contains at least a $(1/2+\eps)$-fraction of ones, because for every input $x$, at least $(1/2+\eps)$ of the circuits in ${\cal C}$ correctly compute Andreev's function on $x$. It follows that the entire matrix has at least a $(1/2+\eps)$-fraction of  ones; this is a contradiction. An analogous argument works for wires, too.
\end{proof}

It follows (for example) that Andreev's function has no $\TC^0_3$ circuit of $O(n^{1.1})$ gates where the output gate has fan-in $o(n^{2/15})$.

\subsection{Small Circuits for Andreev's Function}

We cannot expect to prove much stronger lower bounds for $A_n$, as there are nice circuit constructions for the function. In particular, we cannot hope to achieve super-linear gate lower bounds for $\TC^0_3$ using $A_n$:

\begin{reminder}{Theorem~\ref{Andreev-ub-thm}} The function $A_n$ has (uniform) depth-3 $\TC^0$ circuits of $O(n)$ gates\footnote{Recall that a $\TC^0$ circuit is composed of MAJORITY gates with negations.}, (uniform) $\THR \circ \THR$ circuits of $O(n^3/\log n)$ gates, parity decision trees of depth at most $\log_2 (n)$, and (uniform) $\MODthree \circ \MODtwo$ circuits of size $O(n^2)$.
\end{reminder}

\begin{proof} (Sketch) In all below constructions, we use the fact that Andreev's function $A_n$ can be represented straightforwardly as an OR of $n/2$ ANDs of $O(\log n)$ parities over $O(n/\log n)$ variables, where over the entire circuit there are only $O(\log n)$ total parity gates.

1. First, since MAJORITY (a.k.a. MAJ) can simulate AND, we can replace the OR of AND part in the above circuit for $A_n$ with an OR of MAJ; this part has $O(n)$ gates. Each of the $O(\log n)$ different parity functions on $O(n/\log n)$ variables can be computed with a $\LTF \circ \MAJ$ circuit of $O(n/\log n)$ gates, where the weights of the LTF gate are at most polynomial in $n$, and the sum computed in the top LTF gate always equals either $1$ or $-1$ (indeed, this is true for any symmetric function; see Proposition 1 in ~\cite{Hajnal93}). Given that these LTFs have this property, we can simply ``merge'' the top part computing an OR-MAJ with the $O(\log n)$ $\LTF \circ \MAJ$ circuits computing the bottom part, resulting in an OR of $\MAJ \circ \MAJ$ circuit of $O(n)$ total gates.

2. Each of the $O(\log n)$-fan-in ANDs in the above circuit for $A_n$ can be computed by a weighted sum of $n$ parities, using the Fourier representation of the $O(\log n)$-bit AND function. The OR of $O(n)$ of these weighted sums can be easily written as an LTF of $O(n^2/\log n)$ parities on $O(n)$ variables. From the previous paragraph, each of these parities can replaced by an $\THR \circ \THR$ circuit of $O(n)$ gates. The sums computed in the top LTF gates are either $1$ or $-1$, so these outputs can be composed with the top LTF gate (as in part 1), resulting in an $\THR \circ \THR$ circuit of $O(n^3/\log n)$ gates.

3. In $\log_2 (n/2)$ depth (and $n/2$ leaves), we can compute each of the $\log_2(n/2)$ parities of $A_n$ one by one, then output the appropriate bit of the multiplexer function with one more layer of depth.

4. Every MOD3-MOD2 circuit $C$ of size $s$ corresponds to some $\F_3$ polynomial $p(x_1,\ldots,x_n)$ of $s-1$ monomials over $\{-1,1\}$, where $C(x)=1$ if and only if $p(x)= 0 \bmod 3$. Note that, as a polynomial over $\{0,1\}$, the multiplexer function has degree $\log n$ and sparsity $n$. Hence we can express Andreev's function (namely, a multiplexer of $n/\log n$-length parities) as an $\F_3$-polynomial over $\{-1,1\}$ which is a sum of $n$ products of $\log n$ ``terms'', where each term has the form $(1+ \prod x_i)/2$ and each $\prod x_i$ is over one of the $n/\log n$-length parities in Andreev's function. Expanding this polynomial results in a sum of $O(n^2)$ monomials in total, translating into a MOD3 of $O(n^2)$ MOD2s.
\end{proof}

\section{Depth-Three Lower Bounds}

We now turn to proving lower bounds against $\MAJ\circ\THR\circ\THR$ circuits computing an explicit function. We begin with the observation that an $s(n)$-size $\MAJ\circ\THR\circ\THR$ circuit necessarily has $\Omega(1/s(n))$ correlation with at least one of the $\THR\circ\THR$ subcircuits. Therefore, it would suffice to find a function which is not $(1/2+1/\poly(n))$-approximable by $\THR\circ\THR$ circuits of small size. Unfortunately, Andreev's function is not sufficient for these purposes; it has correlation $\Omega(1/\sqrt{n})$ with the majority over the bits corresponding to $x$; the problem is essentially that \emph{typical} $\log_2(n)$-bit functions have correlations on the order of $1/\sqrt{n}$ with each other. (This is of course to be expected, since Andreev is in linear-size $\TC^0_3$.)

To overcome this issue, we need to change our hard function. We have to use the $n/2$ bits $x$ of $A_n$ to encode a set of functions on \emph{more than} $\log_2(n)$ bits that \emph{all} have low correlation with each other. We will make use of the following construction:

\begin{proposition}
\label{eps-biased} Let $\eps > 0$ and $t, m> 1$. There is a polynomial-time computable matrix $A_{m,t}\in \F_2^{m\times t}$ so that for any $x\neq y$ in $\F_2^{t}$, we have that $Ax$ and $Ay$ agree on $m(1/2 + O(\epsilon))$ bits, where $\epsilon = t/\sqrt{m}$.
\end{proposition}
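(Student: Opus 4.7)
The plan is to reduce the proposition to a well-known object in pseudorandomness: an $\epsilon$-balanced linear code (equivalently, an $\epsilon$-biased set). First I would observe that for $x \neq y$ in $\F_2^t$, the number of bits on which $Ax$ and $Ay$ agree equals the number of zero coordinates of $A(x+y)$, and since $z := x + y$ is a nonzero element of $\F_2^t$, it suffices to produce a matrix $A \in \F_2^{m \times t}$ such that for every nonzero $z \in \F_2^t$, the Hamming weight of $Az$ lies in the interval $[m(1/2 - O(\epsilon)), m(1/2 + O(\epsilon))]$, where $\epsilon = t/\sqrt{m}$. In other words, the columns of $A$ must span an $\epsilon$-balanced code.

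Next, the plan is to invoke a standard explicit construction to produce such a code with the stated parameters. The cleanest choice for matching the exponent $\epsilon = t/\sqrt{m}$ exactly is the Alon--Goldreich--H{\aa}stad--Peralta construction: let $q = 2^k$ be a power of two with $q \geq t/\epsilon$, and index the $m = q^2$ rows of $A$ by pairs $(\alpha,\beta) \in \F_q \times \F_q$; define the $(\alpha,\beta)$-th row of $A$ to be the $t$-bit vector whose $i$-th entry is $\langle \alpha^i , \beta \rangle$ (inner product in $\F_q$ viewed as $\F_2^k$ with respect to a fixed basis). Since $q \geq t/\epsilon$ forces $m = q^2 \geq t^2/\epsilon^2$, the choice $\epsilon = t/\sqrt{m}$ is precisely in the regime this construction covers, and $A$ is clearly computable in $\poly(m,t)$ time.

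The only real content to verify is the $\epsilon$-balanced bound for this explicit $A$. For any nonzero $z \in \F_2^t$, writing the character bias $\mathbb{E}_{\alpha,\beta}\bigl[(-1)^{\langle z, A_{(\alpha,\beta)}\rangle}\bigr]$ in $\F_q$-arithmetic, the inner sum over $\beta \in \F_q$ vanishes unless $\sum_i z_i \alpha^i = 0$ in $\F_q$, which is a polynomial of degree at most $t-1$ in $\alpha$ with at most $t-1$ roots; averaging over $\alpha$ therefore yields bias at most $(t-1)/q \leq \epsilon$. Translating this bound into the Hamming-weight language gives the desired two-sided control on the weight of every nonzero $Az$, and hence the agreement bound.

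The main obstacle is simply identifying a construction whose parameter regime matches $\epsilon = t/\sqrt{m}$ exactly (other $\epsilon$-biased constructions like Naor--Naor give better rate $m = O(t/\epsilon^2)$ but require a slightly different presentation); the AGHP choice above hits the target on the nose and has a two-line polynomial-roots proof. Everything else is a routine translation between ``agreement of $Ax$ and $Ay$,'' ``weight of $A(x+y)$,'' and ``character bias,'' so no further subtlety is expected.
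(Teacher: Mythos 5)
Your proof is correct and takes essentially the same route as the paper: both reduce the statement to exhibiting a generator matrix of an $\eps$-balanced code (equivalently, an $\eps$-biased set of size $O(t^2/\eps^2)$) and both rest on the Alon--Goldreich--H{\aa}stad--Peralta construction. The only difference is one of exposition: the paper cites \cite{AlonGHP92} as a black box for the size bound $|S| = O(t^2/(\eps')^2)$, whereas you write out the powering construction and the polynomial-roots bias estimate explicitly.
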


\begin{proof} Alon \emph{et al.}~(\cite{AlonGHP92}, Section 4) show how to efficiently construct an $(\eps')$-biased set $S \subseteq \{0,1\}^t$ such that $|S| = ct^2/(\eps')^2$ for some constant $c > 0$. Setting $m = ct^2/(\eps')^2$ means that $\eps' = \sqrt{c}t/\sqrt{m}$; we will set $\eps = \eps'/\sqrt{c}$. The $(\eps')$-biased property means that for every non-zero vector $v \in \{0,1\}^t$, \[\Pr_{r \in S}\left[\langle r,v\rangle \neq 0 \bmod 2\right] \in ((1/2 - \sqrt{c}\cdot \epsilon)m,(1/2+\sqrt{c}\cdot \eps)m).\] Letting the vectors of $S$ form the rows of $A_{m,t} \in \F_2^{m\times t}$, every column of $A_{m,t}$ has Hamming weight in the interval $((1/2 - \Theta(\epsilon))m,(1/2+\Theta(\eps))m)$, the matrix $A$ forms a code of distance $m(1/2 - \Theta(\epsilon))$, and the conclusion holds.
\end{proof}

\begin{corollary}\label{uncorrFunctCor} Let $m$ be a power of two, and let $n \leq \sqrt{m}$ be a positive integer.
There is a function $F:\{0,1\}^{\log_2(m)}\times \{0,1\}^n\rightarrow \{0,1\}$ computable in time $\poly(m,n)$ so that for any strings $x,y\in\{0,1\}^n$ with $x\neq y$ we have $F(z,x)=F(z,y)$ for a $(1/2+O(\epsilon))$-fraction of $\log_2(m)$-bit strings $z$, where $\epsilon = n/\sqrt{m}.$
\end{corollary}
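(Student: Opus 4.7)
The plan is to reduce this corollary directly to Proposition~\ref{eps-biased}. Applying that proposition with $t = n$ and the given $m$ yields a polynomial-time constructible matrix $A_{m,n} \in \F_2^{m \times n}$ such that $A_{m,n}x$ and $A_{m,n}y$ agree in exactly $m(1/2 + O(\epsilon))$ coordinates whenever $x \neq y$, with $\epsilon = n/\sqrt{m}$. (The hypothesis $n \leq \sqrt{m}$ ensures $\epsilon \leq 1$, so the statement is nonvacuous.)

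I would then define $F$ by interpreting the index set of the output coordinates as $\{0,1\}^{\log_2 m}$. Concretely, fix any bijection $\mathrm{bin} : \{0,1\}^{\log_2 m} \to \{1,\ldots,m\}$ and set
\[
F(z,x) := (A_{m,n}\,x)_{\mathrm{bin}(z)},
\]
so $F(z,x)$ is the $\mathrm{bin}(z)$-th bit of the codeword $A_{m,n}x$. Evaluating $F(z,x)$ requires reading a single row of $A_{m,n}$ and computing an inner product modulo $2$, which takes time $\poly(m,n)$; building $A_{m,n}$ itself is also $\poly(m,n)$ by the construction cited from \cite{AlonGHP92}.

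For correctness, fix any $x \neq y$ in $\{0,1\}^n$. By Proposition~\ref{eps-biased}, the number of coordinates $i \in \{1,\ldots,m\}$ on which $(A_{m,n}x)_i = (A_{m,n}y)_i$ is $m(1/2 + O(\epsilon))$. Pulling this back along $\mathrm{bin}$, the set of strings $z \in \{0,1\}^{\log_2 m}$ with $F(z,x) = F(z,y)$ has size $m(1/2 + O(\epsilon))$, i.e. a $(1/2 + O(\epsilon))$-fraction of all $\log_2(m)$-bit strings, as required. There is no real obstacle here: the corollary is essentially a syntactic repackaging of Proposition~\ref{eps-biased}, with the only minor bookkeeping being the identification of the $m$ output coordinates of the code with the domain $\{0,1\}^{\log_2 m}$ of the first input to $F$.
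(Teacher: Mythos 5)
Your construction $F(z,x) = (A_{m,n}x)_{\mathrm{bin}(z)}$ is exactly the paper's $F(z,x) = M(A_{m,n}x, z)$, with the multiplexer $M$ written out explicitly; the correctness argument is likewise the same direct pullback of Proposition~\ref{eps-biased}. Your proof is correct and follows essentially the same approach as the paper.
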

\begin{proof}
Set $F(z,x):= M(A_{m,n}x,z)$, where $M$ is the $\log_2(m)$-bit multiplexer, and $A_{m,n}\in \F^{m\times n}_2$ is as in Proposition~\ref{eps-biased}. That is, we apply the linear transformation $A_{m,n}$ to $x \in \{0,1\}^n$ yielding an $m$-bit string $x'$, then for $z \in \{0,1\}^{\log_2(m)}$ we return the bit in the $z$th position of $x'$.
\end{proof}

Just to foreshadow a bit, note that in the construction of our final hard function we will choose $m$ to be a power of two that is roughly $n^{16}$, so the value of $\eps$ will be about $1/n^7$.

Ranging over all possible inputs $x$, the function $F$ above produces many strings of length $m$ that are nearly uncorrelated with each other. We next prove the somewhat coding-theoretic claim that no particular string can be well-correlated with many of them. Let the \emph{relative Hamming distance} $rh(x,y)$ of $x$ and $y$ be the fraction of bit positions in which $x$ and $y$ agree.

\begin{lemma}\label{uncorrAgreeLem}
Let $\cal S$ be a collection of $m$-bit strings, any two of which have relative Hamming distance $1/2 + \Omega(\epsilon)$. Then for any other $m$-bit string $T$, there are at most $O(\epsilon^{-1})$ elements $S \in \cal S$ such that $|rh(T,S)-1/2|>\Omega(\epsilon^{1/2})$.
\end{lemma}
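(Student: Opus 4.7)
The plan is to translate the problem into linear algebra using a $\pm 1$ encoding, and then apply a standard second-moment / Cauchy--Schwarz argument of the style used in the Plotkin bound and in list-decoding proofs. For each $m$-bit string $X$, define the signed vector $\tilde X := 2X - \mathbf{1} \in \{-1,+1\}^m$. Then a direct computation gives
\[
rh(X,Y) - 1/2 \;=\; \frac{\langle \tilde X, \tilde Y\rangle}{2m},
\]
so the hypothesis on $\mathcal{S}$ (which I read as $|rh(S,S')-1/2| \le O(\epsilon)$ for distinct $S,S' \in \mathcal S$, matching the construction of Proposition~\ref{eps-biased}) translates into the near-orthogonality condition $|\langle \tilde S, \tilde S'\rangle| \le C\epsilon m$ for some constant $C>0$. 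Similarly, the conclusion we want is: at most $O(1/\epsilon)$ of the vectors $\tilde S$ satisfy $|\langle \tilde T, \tilde S\rangle| > c\sqrt{\epsilon}\,m$, for an appropriate constant $c > 0$.

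Suppose, for contradiction, that $\tilde s_1,\ldots,\tilde s_N$ are such vectors. Set $\sigma_k := \operatorname{sign}\langle \tilde T, \tilde s_k\rangle$ and $u := \sum_{k=1}^N \sigma_k \tilde s_k$. On one hand,
\[
\langle \tilde T, u\rangle \;=\; \sum_{k=1}^N \bigl|\langle \tilde T, \tilde s_k\rangle\bigr| \;>\; N \cdot c\sqrt{\epsilon}\,m,
\]
so $\langle \tilde T, u\rangle^2 > c^2\, N^2 \epsilon\, m^2$. On the other hand, expanding $\|u\|^2$ and using $\|\tilde s_k\|^2 = m$ together with the near-orthogonality hypothesis gives
\[
\|u\|^2 \;=\; Nm \;+\; \sum_{k\neq \ell} \sigma_k\sigma_\ell \langle \tilde s_k,\tilde s_\ell\rangle \;\le\; Nm + N(N-1)\,C\epsilon\, m.
\]
Combining with Cauchy--Schwarz ($\langle \tilde T, u\rangle^2 \le \|\tilde T\|^2\|u\|^2 = m\|u\|^2$) yields
\[
c^2\, N^2 \epsilon\, m^2 \;<\; m \cdot \bigl(Nm + N^2 C\epsilon\, m\bigr),
\]
which rearranges to $N(c^2 - C)\epsilon < 1$. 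Choosing the hidden constant $c$ in $\Omega(\sqrt{\epsilon})$ so that $c^2 > 2C$, we conclude $N = O(1/\epsilon)$, as required.

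The main obstacle here is quantitative rather than conceptual: the threshold $\Omega(\sqrt{\epsilon})$ in the lemma's conclusion must have a large enough implicit constant so that the ``diagonal'' contribution $Nm$ to $\|u\|^2$ dominates the ``off-diagonal'' error $N^2 C\epsilon m$ precisely in the regime $N \lesssim 1/\epsilon$. The reason the $\sqrt{\epsilon}$ threshold (as opposed to $\epsilon$) is necessary is exactly this trade-off: one cannot expect better correlation control than $\sqrt{\epsilon}$ when averaging $\Theta(1/\epsilon)$ nearly-orthogonal unit vectors, and indeed the argument would break down if one tried to lower the threshold further.
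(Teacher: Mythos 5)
Your proof is correct and takes essentially the same approach as the paper: both evaluate a quadratic form on the near-orthogonal signed vectors and derive a contradiction from positive semi-definiteness. (The paper forms the $(N+1)\times(N+1)$ Gram matrix $G$ of $\tilde T,\tilde s_1,\ldots,\tilde s_N$ and tests it against the explicit vector $v=(C^2\epsilon^{-1/2},-1,\ldots,-1)$, which is precisely the inequality $\|\alpha\tilde T - \sum_k \tilde s_k\|^2\ge 0$; your Cauchy--Schwarz step with the sign-normalized sum $u=\sum_k\sigma_k\tilde s_k$ is the same computation, with the extra benefit that the signs $\sigma_k$ cleanly handle the two-sided hypothesis $|rh(T,S)-1/2|>\Omega(\sqrt{\epsilon})$, whereas the paper implicitly reduces WLOG to one-sided correlations.)
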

\begin{proof}
Suppose for sake of contradiction that this is not the case. Then for a sufficiently large constant $C > 0$, there are $S_1,S_2,\ldots,S_t\in{\cal S}$ with $t= C \epsilon^{-1}$, and an $m$-bit $T$ satisfying $rh(S_i,T) \leq 1/2-C\sqrt{\epsilon}$ for all $i$. For notational convenience, set $S_0 := T$, and construe $S_0,\ldots,S_t$ vectors in $\{\pm 1\}^m$ rather than $\{0,1\}^m$. Consider the $(t+1)\times(t+1)$ Gram matrix $G$, defined for all $i,j = 0,\ldots,t$ as $G[i,j] = \langle S_{i},S_{j} \rangle/m$.
Note that \begin{compactitem}
\item[(a)] $G$ is positive semi-definite,
\item[(b)] $G[i,i] = 1$ for all $i$,
\item[(c)] $G[0,i]\geq C\sqrt{\epsilon}$ for all $i$, and
\item[(d)] $|G[i,j]| \leq O(\epsilon)$ for all $i\neq j$, $i,j>0$.
\end{compactitem}

Now define the vector $v = (C^2 \epsilon^{-1/2},-1,-1,\ldots,-1)$. Observe that
\begin{align*}
vGv^T & \leq C^4 \epsilon^{-1} - 2 t (C^2 \epsilon^{-1/2}) (C \epsilon^{1/2}) + t + t^2 O(\epsilon)\\
& = C^4 \epsilon^{-1} - 2 C^4 \epsilon^{-1} + C \epsilon^{-1} + O(C^2) \epsilon^{-1}\\
& = (-C^4+O(C^2)+C) \epsilon^{-1} <0,
\end{align*}
which contradicts the fact that $G$ is positive semi-definite.
\end{proof}

Lemma~\ref{uncorrAgreeLem} implies that no $m$-bit string $T$ can have large correlation with $F(-,x)$ for very many strings $x$. Using the fact that there are relatively few distinct functions computable by small depth-two threshold circuits, we can prove that there must be a string $x$ that agrees with none of them.
\begin{corollary}\label{badEntryCor}
For integers $n$ and $m$ with $m$ a power of $2$ and $m<2^{n/2}$, define $F$ as in Corollary \ref{uncorrFunctCor}. There is an $x\in \{0,1\}^n$ so that, for \emph{all} $\log_2(m)$-bit functions implementable with $\THR\circ\THR$ circuits $C$ having $o(n/\log^2 n)$ gates, ${C}(z)$ agrees with $F(z,x)$ on at most a $(1/2+\epsilon)$ fraction of $\log_2(m)$-bit strings $z$, where $\epsilon = O(n^{1/2})/m^{1/4}$.
\end{corollary}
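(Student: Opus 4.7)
The plan is a counting argument that mirrors Corollary~\ref{random-lb}, but replaces the Chernoff-style bound on ``fraction of agreements'' with the sharper bound from Lemma~\ref{uncorrAgreeLem}, exploiting the fact that the strings $F(-,x)$ form a near-orthogonal family.

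First, I would invoke Theorem~\ref{number-ckts} with $\log_2(m)$ inputs to bound the number $|\mathcal{C}|$ of distinct Boolean functions on $\log_2(m)$ bits that are realizable by $\THR\circ\THR$ circuits with $s = o(n/\log^2 n)$ gates: this is at most $2^{O(\log^2(m)\cdot s)}$. In the regime relevant to the eventual application (where $m$ is polynomial in $n$, so $\log m = O(\log n)$), this simplifies to $2^{o(n)}$. The reason the hypothesis is stated as $s = o(n/\log^2 n)$ rather than $o(n)$ is precisely to absorb the $(\log m)^2$ factor that Theorem~\ref{number-ckts} charges per gate.

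Second, for each $C \in \mathcal{C}$ I would identify $C$ with its truth table $T_C \in \{0,1\}^m$ and bound the ``bad set'' $B_C := \{x : \text{$C(z)$ and $F(z,x)$ agree on more than a $(1/2+\epsilon)$-fraction of $z$}\}$. The strings $\{F(-,x)\}_{x\in\{0,1\}^n}$ form a collection $\mathcal{S}$ of $m$-bit strings whose pairwise relative Hamming distances all lie within $O(\epsilon_0)$ of $1/2$, where $\epsilon_0 = n/\sqrt{m}$, by Corollary~\ref{uncorrFunctCor}. Applying Lemma~\ref{uncorrAgreeLem} to $\mathcal{S}$ with the auxiliary string $T = T_C$ shows that at most $O(\epsilon_0^{-1}) = O(\sqrt{m}/n)$ many $F(-,x)$ deviate from $T_C$ by more than $\Omega(\sqrt{\epsilon_0})$ in either direction. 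Choosing $\epsilon = c\sqrt{\epsilon_0} = c\cdot n^{1/2}/m^{1/4}$ for a sufficiently large constant $c$ therefore yields $|B_C| \leq O(\sqrt{m}/n)$.

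Third, a union bound over $\mathcal{C}$ gives $|\bigcup_C B_C| \leq |\mathcal{C}|\cdot O(\sqrt{m}/n)$. Under the hypothesis $m < 2^{n/2}$ together with $\log m = O(\log n)$, this product is of the form $2^{o(n)}\cdot 2^{O(\log n)} = o(2^n)$, so some $x \in \{0,1\}^n$ avoids every $B_C$ and is the witness claimed by the corollary. The main (and essentially only) subtlety is arithmetic: I need to verify that $(\log m)^2\cdot s + \tfrac{1}{2}\log m - \log n < n$ for the choice of parameters used later (namely $m$ polynomial in $n$ and $s$ of order $n/\log^2 n$ with a small enough implicit constant). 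Everything else is routine composition of the already-established tools (Theorem~\ref{number-ckts}, Corollary~\ref{uncorrFunctCor}, Lemma~\ref{uncorrAgreeLem}).
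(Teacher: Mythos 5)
Your proposal is correct and follows essentially the same three-step counting argument as the paper's proof: bound the number of small $\THR\circ\THR$ circuits on $\log_2 m$ inputs via Theorem~\ref{number-ckts}, bound the number of ``bad'' $x$ per circuit via Lemma~\ref{uncorrAgreeLem} applied to the near-orthogonal family from Corollary~\ref{uncorrFunctCor}, and union-bound (your per-circuit bad count $O(\sqrt{m}/n)$ is tighter than the paper's stated $O(m)$, but both suffice). Your concluding arithmetic caveat is well taken: the hypothesis $m < 2^{n/2}$ alone does not make $2^{O((\log m)^2 s)}$ be $2^{o(n)}$ when $s = o(n/\log^2 n)$, so one really does need the regime $\log m = O(\log n)$ from the downstream application (where $m \approx n^{16}$), a point the paper's proof leaves implicit.
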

\begin{proof}
By Lemma \ref{uncorrAgreeLem} and Corollary \ref{uncorrFunctCor}, every such circuit $C$ agrees with $F(z,x)$ on more than a $(1/2+\epsilon)$-fraction of $z$'s, for at most $O(m)$ values of $x$. By Theorem \ref{number-ckts}, there are only $2^{o(n)}$ distinct functions computed by circuits of $o(n/\log^2 n)$ gates. Therefore there are at most $m2^{o(n)}$ strings $x$ so that $F(-,x)$ is $(1/2+\eps)$-approximated by some such circuit. Since this is less than the total number of $n$-bit strings, there must be some $x \in \{0,1\}^n$ so that $F(-,x)$ is not $(1/2+\eps)$-approximated by any $\LTF \circ \LTF$ circuit of $o(n/\log^2 n)$ gates.
\end{proof}

We now introduce our hard function for $\TC^0_3$. It is essentially Andreev's function, but with our $F$ in place of the multiplexer. Earlier, we showed that plugging in a hard function in the $x$-input of the multiplexer yields a function that is hard to compute by $\THR \circ \THR$ circuits (thus giving our lower bound of Theorem~\ref{andreev-lb}). Now, we want to show that plugging in the ``correct'' input to $F$ yields a function that is hard to even weakly approximate. Morally speaking, this should produce a function that has less than $1/\poly(n)$ correlation with every small $\THR \circ \THR$ circuit. Unfortunately, although the intuition of this latter claim breaks down when trying to prove it, the intuition is true enough to yield a $\MAJ\circ\THR\circ\THR$ lower bound.

\begin{definition}[The Hard Function $B_{n,k}$] For integers $k~|~n$, define the $2n$-bit function $B_{n,k}(x,a) := F(z,x)$, where $x, a \in \{0,1\}^n$, and $z \in \{0,1\}^k$ is defined by $z_i = \sum_{j=(n/k)(i-1)+1}^{(n/k)i} a_j \pmod{2}$.
\end{definition}

That is, we compute PARITY on $k$ groups of $n/k$ variables each from the string $a$, then feed the resulting $k$-bit string $z$ into $F(z,x)$. (Recall that $F(z,x)$ itself prints the $z$th bit of $A_{m,n}x$, where $A_{m,n} \in \F_2^{m \times n}$ is the matrix of an small-biased set on vectors of length $n$.) Finally, we define $B_n(x,a) := B_{n,k}(x,a)$, where $k = 16\log_2 n$.

\begin{reminder}{Theorem~\ref{depth3lbThm}} There is no $\MAJ\circ\THR\circ\THR$ circuit of $o(n^{3/2}/\log^3 n)$ gates or $o(n^{5/2}/\log^{7/2}n)$ wires that computes $B_{n}$.
\end{reminder}

\begin{proof} Let $k = 16\log_2 n$ in the following; for convenience we will state lower bounds for $B_{n,k}$ in terms of both $n$ and $k$. We assume for sake of contradiction that there is some $\MAJ\circ\THR\circ\THR$ circuit $C$ computing $B_{n,k}$, with either $s=o(n^{3/2}/(k\log^2(n)))$ gates or $w=o(n^{5/2}/(k^{3/2}\log^2(n)))$ wires.

We set the input $x$ in $B_{n,k}(x,a)$ to correspond to one of the ``bad'' strings guaranteed by Corollary \ref{badEntryCor}. In particular, no $\THR\circ\THR$ circuit with $k$ inputs and fewer than $cn/\log^2(n)$ gates agrees with the $k$-input function $F(-,x)$ on more than a $(1/2+O(n^{1/2} 2^{-k/4}))$-fraction of inputs, for some sufficiently small $c > 0$. Note that $n^{1/2}/2^{k/4} \ll n^{-3}$ by our choice of $k$.

Next, let $\cal P$ be the partition of the remaining $n$ bits of input into sets of the form $\{a_{i(n/k)+1},\ldots,a_{(i+1)(n/k)}\}$ for $0\leq i \leq k-1$. By Lemmas \ref{rr-LTF} and \ref{wire-rr-Lem}, a random restriction from ${\cal R}_{\cal P}$ has the effect that each bottom level gate has probability $O(k/\sqrt{n})$ of not being fixed by the restriction, and each bottom level gate with $u$ wires leading into it has probability $O(uk^{3/2}/n^{3/2})$ of not being fixed to a function of a single input. Therefore, there is some restriction $\rho$ in this family that leaves at most $O(sk/\sqrt{n}) = o(n/\log^2(n))$ or $O(wk^{3/2}/n^{3/2})=o(n/\log^2(n))$ bottom level gates which depend non-trivially on more than one input. In either case this is $o(n/\log^2(n))$.

Upon making the restriction $\rho$, the function $B_{n,k}$ on the remaining $k$ bits is equivalent to the function $z\mapsto F(z,x)$, after possibly negating some of the inputs (note these negations cannot change the circuit size). Our circuit $C$ after the restriction $\rho$ is equivalent to a $\MAJ\circ\THR\circ\THR$ circuit with $o(n/\log^2(n))$ bottom-layer gates and $o(n^{5/2})$ middle-layer gates. As the middle-layer gates correspond to $\THR\circ\THR$ circuits with $o(n/\log^2(n))$ gates, each of their outputs have correlation at most $O(n^{1/2} 2^{-k/4}) \ll n^{-3}$ with the output of $B_{n,k}$. However, this means that all of the inputs to the top level majority gate have a $o(1/s)$ correlation with $B_{n,k}$, and thus $C$ and $B_{n,k}$ must disagree on some input. This is a contradiction.
\end{proof}

Although the above proof suggests that $B_{n,k}$ disagrees with every small $\THR\circ\THR$ circuit on more than a $(1/2+\epsilon)$-fraction of inputs for $\epsilon=n^{-\omega(1)}$, the reader should note carefully that we do not do this, and we are unable to prove this. (What we prove is that, \emph{after an appropriate restriction}, the \emph{remaining} $\THR \circ \THR$ subcircuits have low correlation with the \emph{remaining} subfunction of $B_n$.) This is because there is always a probability of $\Omega(1/\sqrt{n})$ that no bottom-level gates in our circuit are simplified after restriction, and that $C$ agrees with $B_n$ on all inputs consistent with such restrictions. Therefore, our approach will be insufficient to prove such a correlation bound, for any $\epsilon<1/\sqrt{n}$.

\section{Conclusion}

We conclude with some open questions around linear threshold circuits that appear tractable.
\begin{compactenum}
\item Can one obtain asymptotically tight results for computing $A_n$ or our function $B_{n}$ with low-depth threshold circuits? Can average-case lower bounds be proved with $\eps \leq 1/n^{\omega(1)}$, e.g. if we look at functions in ${\sf TIME}[2^{O(n)}]$?
\item Are there polynomial-size $\THR \circ \THR$ circuits for the function IP2? Many previous lower bounds on threshold circuits show that IP2 is hard; it is known that $\THR \circ \MAJ$ and $\MAJ \circ \THR$ circuits require exponential size to compute IP2. Amano and Maruoka~\cite{AmanoM05} point out that any answer to this question, yes or no, would imply new separations of some threshold circuit classes.
\item Is there a faster satisfiability algorithm for $O(n)$-gate $\THR \circ \THR$ circuits? Several recent theoretical SAT algorithms are built from lower bound techniques~\cite{Santhanam10,Seto-Tamaki13,ChenKS14,AbboudWY15,ChenKKSZ15}. Currently, non-trivial SAT algorithms are known only for slightly superlinearly many wires~\cite{IPS13}. Related to this, it would be interesting if one can prove a ``concentration'' version of our random restriction lemma, where the number of gates remaining is tightly concentrated around the expectation.
\item The second author~\cite{WilliamsTHR14} has shown that $\THR \circ \THR$ circuits with $2^{\poly(n)}$ weights and $2^{\delta n}$ gates (for some fixed $\delta > 0$) can be evaluated on all $2^n$ Boolean assignments in $2^n \cdot \poly(n)$ time. Can this fast evaluation algorithm be used to prove an exponential gate lower bound?
\end{compactenum}

\bibliographystyle{alpha}
\bibliography{cc-papers}

\end{document}